\newcommand\underrel[3][]{\mathrel{\mathop{#3}\limits_{%
      \ifx c#1\relax\mathclap{#2}\else#2\fi}}}
\newtheorem{theorem}{Theorem}
\newtheorem{corollary}{Corollary}[theorem]
\newtheorem*{proof*}{proof}
\begin{document}

\title{Non-Orthogonal Multiple Access for Hybrid VLC-RF Networks with Imperfect Channel State Information}


\author{Ahmed Al Hammadi,~\IEEEmembership{Member,~IEEE,}
        Sami~Muhaidat,~\IEEEmembership{Senior Member,~IEEE,}
        Paschalis~C.~Sofotasios,~\IEEEmembership{Senior Member,~IEEE,}
        Mahmoud~Al-Qutayri,~\IEEEmembership{Senior Member,~IEEE,}
        and~Hany~Elgala,~\IEEEmembership{Senior Member,~IEEE}
\IEEEcompsocitemizethanks{\IEEEcompsocthanksitem
This paper was presented in part in IEEE WCNC 2019, Marrakech, Morocco \cite{WCNC}.  This work was  supported in part by Khalifa University under Grant No. KU/RC1-C2PS-T2/8474000137 and Grant No. KU/FSU-8474000122. 

A. Al Hammadi and M. Al-Qutayri are with the Department of Electrical Engineering and Computer Science, Khalifa University, Abu Dhabi, P.O. Box 127 788, UAE (e-mail: {ahmed.alhammadi; mahmoud.alqutayri}@ku.ac.ae).

S. Muhaidat is with the Center for Cyber-Physical Systems, Department of Electrical Engineering and Computer Science, Khalifa University, P.O. Box 127 788, Abu Dhabi,  UAE (e-mail: muhaidat@ieee.org).

P. C. Sofotasios is with the Center for Cyber-Physical Systems, Department of Electrical Engineering and Computer Science, Khalifa University, P.O. Box 127 788, Abu Dhabi, UAE and also with the Department of Electrical Engineering, Tampere University, FI-33101, Tampere, Finland  (e-mail: p.sofotasios@ieee.org). 

H. Elgala is with the Department of Electrical and Computer Engineering, SUNY at Albany, NY 12222, Albany, USA (e-mail: helgala@albany.edu). 

}
}

\maketitle

\begin{abstract}
The present contribution  proposes a general framework for   the energy efficiency analysis of a hybrid visible light communication (VLC) and Radio Frequency (RF) wireless system, in which both VLC and RF subsystems utilize non-orthogonal multiple access (NOMA) technology. The proposed framework is based on realistic communication scenarios as it takes into account the mobility of users, and assumes imperfect channel-state information (CSI). 
In this context, tractable closed-form expressions are derived for the corresponding  average sum rate of NOMA-VLC and its orthogonal frequency division multiple access (OFDMA)-VLC counterpart. It is shown  extensively that incurred CSI errors have considerable impact on the average energy efficiency of both NOMA-VLC and OFDMA-VLC systems and hence, they should not be neglected in practical designs and deployments. Interestingly, we further demonstrate that the average energy efficiency of the hybrid  NOMA-VLC-RF system outperforms NOMA-VLC system under imperfect CSI. Respective computer simulations corroborate the derived analytic results and interesting theoretical and practical insights are provided, which will be useful in the effective design and deployment of conventional VLC and hybrid VLC-RF  systems. 
\end{abstract}

\begin{IEEEkeywords}
Visible light communications, multiple access, imperfect CSI, sum-rate, hybrid wireless technologies. 
\end{IEEEkeywords}

\section{Introduction}
 
\IEEEPARstart{T}{he} rapidly growing demand for data-intensive applications, such as video streaming, virtual reality (VR), and cloud computing, has led to an explosive growth in the global mobile data traffic resulting in an annual traffic of the order of a zettabyte \cite{1}. This poses challenging requirements for the fifth generation (5G) of wireless networks and beyond, including high spectral efficiency, low latency, and massive connectivity \cite{2} and the references therein. In this context, several emerging technologies have been proposed for ultimately boosting  the data rate in an efficient manner. Examples of such technologies include network densification, massive multiple-input-multiple-output (MIMO) systems, millimeter wave communications, and visible light communications (VLC) \cite{3}.

It has become evident that VLC has recently attracted significant interest as a key technology for supporting the next-generation of wireless networks \cite{3}. As a cost-effective and energy-efficient solution, VLC can potentially achieve considerably high data rates of the order of 100 Gbps \cite{4}. Another critical advatage of VLC is that it occupies the unused frequency spectrum from 400 THz to 800 THz, which is 10,000 times greater than the radio frequency (RF) band. In fact, it has been shown that such complementary approaches increase substantially the performance, efficiency and robustness of wireless systems because they effectively combine  the advantageous features of each single technology \cite{4}. Moreover, a core advantage of VLC is that it is naturally secure due to the nature of light, which cannot penetrate through walls and is confined within its area of illumination, offering a high degree of resource reuse. In addition, VLC is immune to electromagnetic interference (EMI) and thus to the existing RF systems, which is crucial particularly in sensitive areas such as health-care centers, aircraft cabins, and other safety-critical environments \cite{4}.

It is also recalled that multiple access techniques have been also utilized extensively in modern wireless technologies to primarily enhance the spectrum efficiency and address the issue of spectrum crunch. To that end, they have ultimately emerged to be capable of improving several key challenging factors in wireless transmission such as spectral efficiency, robustness, reliability,  energy efficiency and, in general, versatile and stringent quality-of-service (QoS) requirements. In this context, non-orthogonal-multiple-access (NOMA) has attracted significant interests as a breakthrough technology for 5G systems beyond. Since the main requirements of 5G networks revolve around high connectivity, low latency, and ultra-high-speed data rate,   NOMA can be considered   an effective enabling technology to address these requirements. 

The most common variant of NOMA is power-domain NOMA, in which users are multiplexed in the power domain by assigning  unique power levels to different users. 
The main feature of NOMA is that it enables multiple users to simultaneously share the available frequency and time resources, leading to significant performance enhancements in terms of spectral efficiency. This process is referred to as superposition coding (SC), while multiuser detection is realized using successive interference cancellation (SIC) at the receiving terminals.
It is noted here that power-domain based NOMA allocates higher power levels to users that encounter severe fading conditions and less power to the users with more favorable channel conditions. The key motivation underlying  this approach is to allocate particular power levels to different users according to the state of their channels and then perform SIC in order to eliminate the resulting interference and to ultimately achieve a better trade-off between throughput and fairness \cite{8,9,10,11} and the references therein.

Based on the distinct advantages of these technologies, the application of NOMA in VLC systems has been investigated in several reported analyses. Specifically, the authors in \cite{12} analyzed the performance of NOMA based multi-user VLC with gain ratio power allocation strategy. It was shown that the sum rate of VLC could be further enhanced by applying an adaptive tuning to the photo-detectors (PD), field-of-view (FOV), and the semi-angle of the light-emitting diodes (LEDs).  
The superiority of NOMA over OFDMA was shown  in \cite{13} through a performance comparison, taking into account the illumination constraints. Likewise, the performance of NOMA-based VLC with uniformly distributed users was evaluated in \cite{14}, assuming the idealistic case of perfect channel-state-information (CSI).
In \cite{15}, the authors studied the data rate region of indoor VLC networks, whereas the authors in \cite{16} investigated the error-rate performance of   NOMA based VLC systems, with both perfect and imperfect CSI.

\indent More recently, the coexistence between indoor VLC and RF has attracted considerable attention due to its potential to provide enhanced performance in indoor communications  \cite{27}.  In this context,  the main motivations  stems from  the need to overcome the limitations of VLC in duplex transmission scenarios and to ensure ubiquitous  service coverage. To this effect, Shao et al. \cite{18} proposed a hybrid VLC-RF system, where RF is used in the uplink. 
Likewise, Wang et al. \cite{19} studied heterogeneous network, which aims at providing high data rate VLC links alongside  high reliability RF links. The authors in \cite{20} investigated  load balancing in a hybrid VLC-RF system while considering user mobility and associated handover signaling. Also, it was shown \cite{21} that the hybrid VLC-RF system can significantly enhance the overall coverage, which is a key requirement in VLC systems. Likewise, a recent study \cite{22} for hybrid VLC-RF system has shown that this hybrid scheme achieves reduced outage probability and lower power consumption per area.
In another study, the authors in \cite{23} proposed a software-defined heterogeneous hybrid VLC-RF small-cell system. However, a key assumption in the study was the availability of perfect CSI, which is an unrealistic assumption, particularly in highly demanding practical communication  scenarios. 
In \cite{28}, the authors designed an online algorithm to minimize the power consumption of an indoor hybrid VLC-RF network while satisfying the constraint of the illumination level.
In the same context, the authors in  \cite{30}   studied optimal grouping for a hybrid NOMA VLC-RF network, under perfect CSI.
Likewise, \cite{31} addressed the problem of optimal resource allocation in NOMA based hybrid VLC-RF with common backhaul, in order to maximize the achievable data rate.
Finally, the authors in \cite{32} studied the performance of  NOMA based hybrid VLC-RF system in the context of simultaneous wireless and power transfer.

Nevertheless, despite its crucial importance, there have been sporadic results on the achievable energy efficiency levels  of hybrid VLC-RF networks. Recently, the authors in \cite{24} addressed  energy consumption in a multihop VLC-RF network. Likewise, the authors in \cite{25} studied the energy efficiency of an OFDMA based hybrid VLC-RF system through maximizing the system's power efficiency, which is defined as the total system rate per unit power. 
However, in this context and to the best of the authors knowledge, none of the previous contributions investigated the average energy efficiency of NOMA-enabled hybrid VLC-RF systems, assuming uniformly distributed users and imperfect CSI, which both have a detrimental effect on the overall performance of VLC and RF communication systems. 
Based on this, the contributions of the present work are: 
\begin{enumerate}
\item We derive  closed-form expressions for the sum rate for NOMA-VLC and OFDMA-VLC with imperfect CSI systems assuming  the generic case of uniformly distributed users.
\item We derive a closed-form expression for the energy efficiency of a hybrid VLC-RF NOMA system under the realistic assumption of  imperfect CSI.
\item We investigate the performance of both VLC-NOMA and VLC-OFDMA under imperfect CSI and develop  valuable insights into the overall system performance, which could provide interesting guidelines for practical designs.
\item We determine the impact of the CSI error on both NOMA-VLC and NOMA-VLC-RF systems, including its detrimental effect on the achievable  average sum rate and average energy efficiency.
\end{enumerate}
To the best of the authors' knowledge, the offered results have not been reported in the open technical literature. 

The remainder of this paper is organized as follows: Section II describes comprehensively the considered system and channel models. Sections III and IV are  devoted to the derivation of the   average sum rate for the considered NOMA based VLC and hybrid VLC-RF configurations, respectively, under imperfect CSI.  The corresponding energy efficiency of the considered hybrid NOMA VLC-RF system is quantified in Section V, followed by  the respective numerical results and useful discussions  in Section VI. Finally, the paper is concluded with useful remarks in Section VII.

\section{System  and Channel Models}
\begin{figure}
\centering
\includegraphics[height=10cm, width = 8cm]{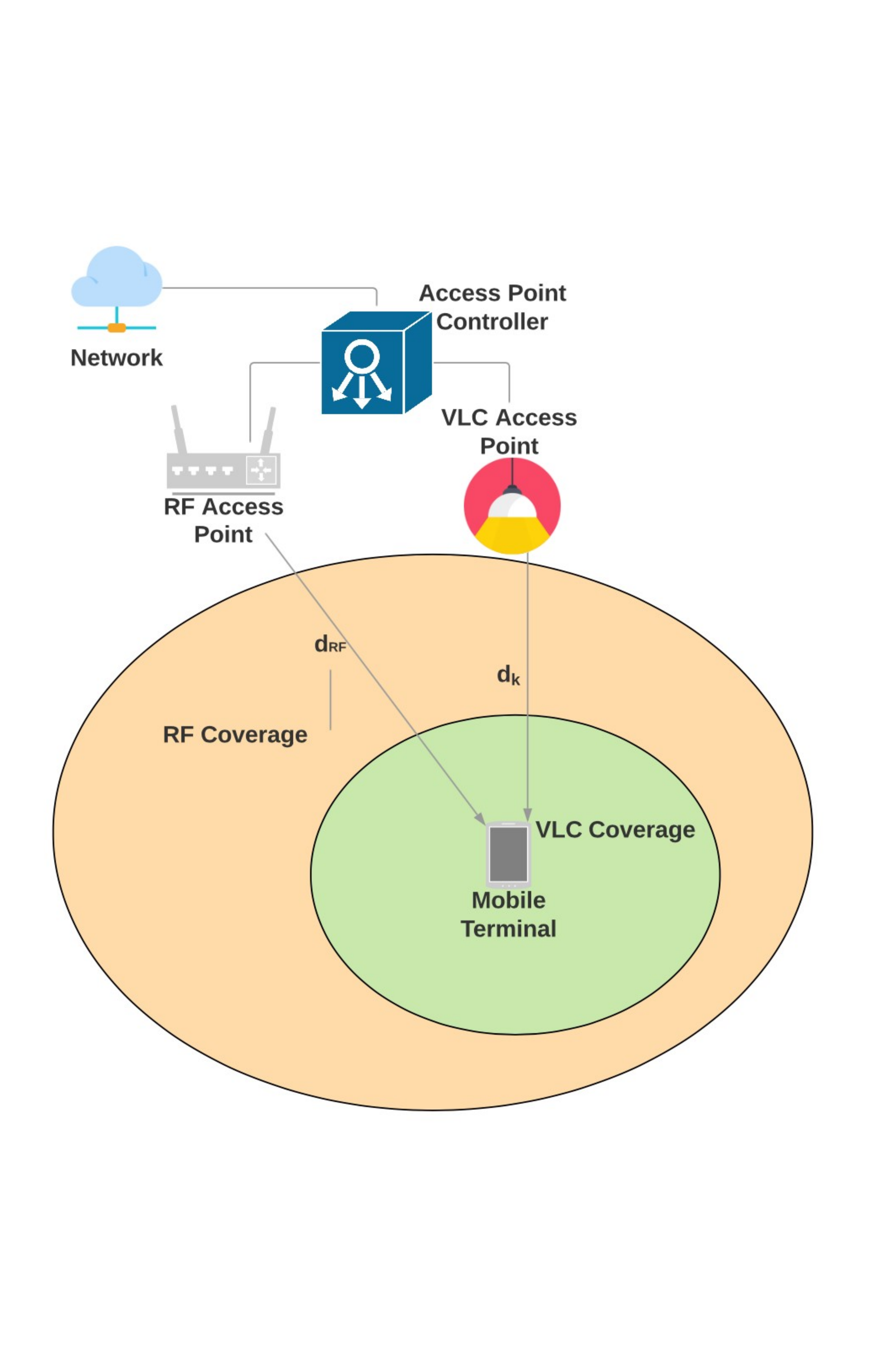}
\caption{The hybrid VLC-RF system model.}
\label{fig2}
\end{figure}
As shown in Fig. 1, we consider a hybrid VLC-RF network operating in an indoor communication  scenario. The network consists of VLC and RF access points (APs) and without loss of generality, the focus is on the downlink. In this context, the directional VLC AP covers a confined area, while the RF AP cover a much wider area, using omni-directional antennas.
 A total number of $K$ users are uniformly distributed, whereas powers denoted as  $Q_{\rm{RF}}$ and $Q_{\rm{VLC}}$ are assumed for the RF and VLC APs, respectively. The sum represents the total power consumed by the hybrid network.
Also, the VLC access point is placed at a height $L$ from the $k_{\rm{th}}$ end user located at the angle ${\theta _i}$ and radius $r_i$ on the polar coordinate plane, whereas the maximum radius for the VLC AP coverage is denoted by $r_e$. 
It is also noted that the RF APs are assigned non-overlapping channels to avoid interference and VLC APs are expected to carry a large portion of the data traffic.
Furthermore, a mobile terminal is equipped with multi-homing capability, where it can aggregate resources from both radio and optical domains.
It is noted here that this analysis adopts NOMA in both networks, where users share the entire RF bandwidth $B_{\rm{RF}}$ and VLC bandwidth $B_{\rm{VLC}}$. 
To this effect, the $P_T$ term denotes the total transmitted power of the VLC-RF network,  namely
\begin{align}
P_{R,i}+P_{V,i}\le  P_T, 
\end{align}
where  $P_{R,i}$ is the allocated power to the $k_{\rm{th}}$ user over the RF link and $P_{V,i}$ is the allocated power for the $k_{\rm{th}}$ user over the VLC link.
It is noted here that although VLC channels incorporate both line-of-sight (LOS) and non-line-of-sight (NLOS) componenets, the energy of the reflected signal in the considered set up is considerably lower than that of the LOS \cite{5}. Based on this, and without loss of generality, this analysis considers only the LOS component of the optical channel gain.

\subsection{VLC Channel Model}
The signal transmitted by the VLC AP can be expressed as 
\begin{align}
x_i^{\rm{VLC}}  = \sum\limits_{i = 1}^K {\alpha _i}\mathop {}\nolimits^{} \sqrt[{}]{{{P_{{\rm{e}}}}}}{s_i} + {I_{{\rm{DC}}}}, 
\end{align}
where $P_{\rm{e}}$ denotes the total electrical power of all the transmitted signals, $I_{\rm DC}$ is the LED DC bias which is essential for intensity modulation based optical baseband transmission, 
$s_i$ represents the modulated symbol of the $k_{\rm{th}}$ user, and $\alpha_i$ is the power allocation coefficient for the corresponding $k_{\rm{th}}$ user. It is assumed that the transmitted signal for each user follows a uniform distribution with zero mean and unit variance.
Based on this and according to the total power constraint of NOMA   systems, the sum of power allocation coefficients must be unity, namely
\begin{align}
\sum\limits_{i = 1}^K {\alpha _i^2 = 1}.
\end{align}
Subsequently, the optical transmission power of the LED can be expressed  as
\begin{align}
{P_{\rm{opt}}} = \eta {\rm E}[x] = \eta {I_{\rm{DC}}}, 
\end{align}
where $\eta$ denotes the efficiency of the LED, which
without loss of generality it is assumed to be unity. 
To this effect, the received signal at the $k_{\rm{th}}$ user can be  expressed as 
\begin{align}
{y_k^{\rm{VLC}}} = \sqrt {{P_{{\rm{e}}}}} {h_k}\left( {\sum\limits_{i = 1}^{k - 1} {{a_i}{s_i} + {a_k}{s_k}}  + \sum\limits_{i = k + 1}^K {{a_i}{s_i}} } \right) + {z_k}, 
\end{align}
where the channel gain $h_k^{{\rm{VLC}}}$ is given by
\begin{equation} \label{5}
h_k^{{\rm{VLC}}} = \frac {(m+1) A R_{\mathrm{ p}}}{2 \pi d_{k}^{2}}\cos ^{m}(\phi _{k}) T(\psi _{k}) g(\psi _{k}) \cos (\psi _{k})
 \end{equation}
 and $z_k$ is the involved  additive white Gaussian noise with zero mean, and variance $\sigma _k^2 = {N_0}B$, with $N_0$ denoting  the noise power spectral density (PSD), $A$ is the area of the photo-detector, $R_p$ denotes the responsivity of the photo-detector, and  $d_{k}$ is the Euclidean distance between the VLC AP and the $k_{\rm{th}}$ user. 
 Also, $U(\psi _{k})$ and $g(\psi _{k})$ denote the optical filter gain and the optical concentrator, respectively.
  Notably, the above equation indicates that the channel gain $h_{k}$ is inversely proportional to the distance of the $k_{\rm{th}}$ user.
 As shown in Fig. 2, the light emitted from the LED follows a Lambertian radiation pattern with order 
\begin{equation}
 p =  -  \frac{1}{\log _2(\cos ({\phi _{1/2}})},
\end{equation}
where ${\phi _{1/2}}$ is the semi-angle of the VLC AP, ${\psi _{\rm FOV}}$ denotes the receiver’s field of view (FOV), whereas ${\psi _k}$ and ${\phi _k}$ denote the angle of incidence and the angle of irradiance, respectively.
 
It is recalled that in power based NOMA systems, users with stronger channel conditions are allocated less signal power, whereas  users with severe channel conditions are allocated more power, which   implies that
${\alpha _1} \ge ... \ge {\alpha _k}... \ge {\alpha _{K - 1}}... \ge {\alpha _K}$.
Without loss of generality,  we assume that the users  in the considered set up are sorted in an ascending order according to their channels, namely
\begin{align}
\left| {h_K^{\rm{VLC}}} \right| \ge \left| {h_{K - 1}^{\rm{VLC}}} \right| \ge ... \ge \left| {h_k^{\rm{VLC}}} \right| \ge ... \ge \left| {h_1^{\rm{VLC}}} \right|.
\end{align}
\begin{figure}
\centering
\includegraphics[height=8cm, width = 7.5cm]{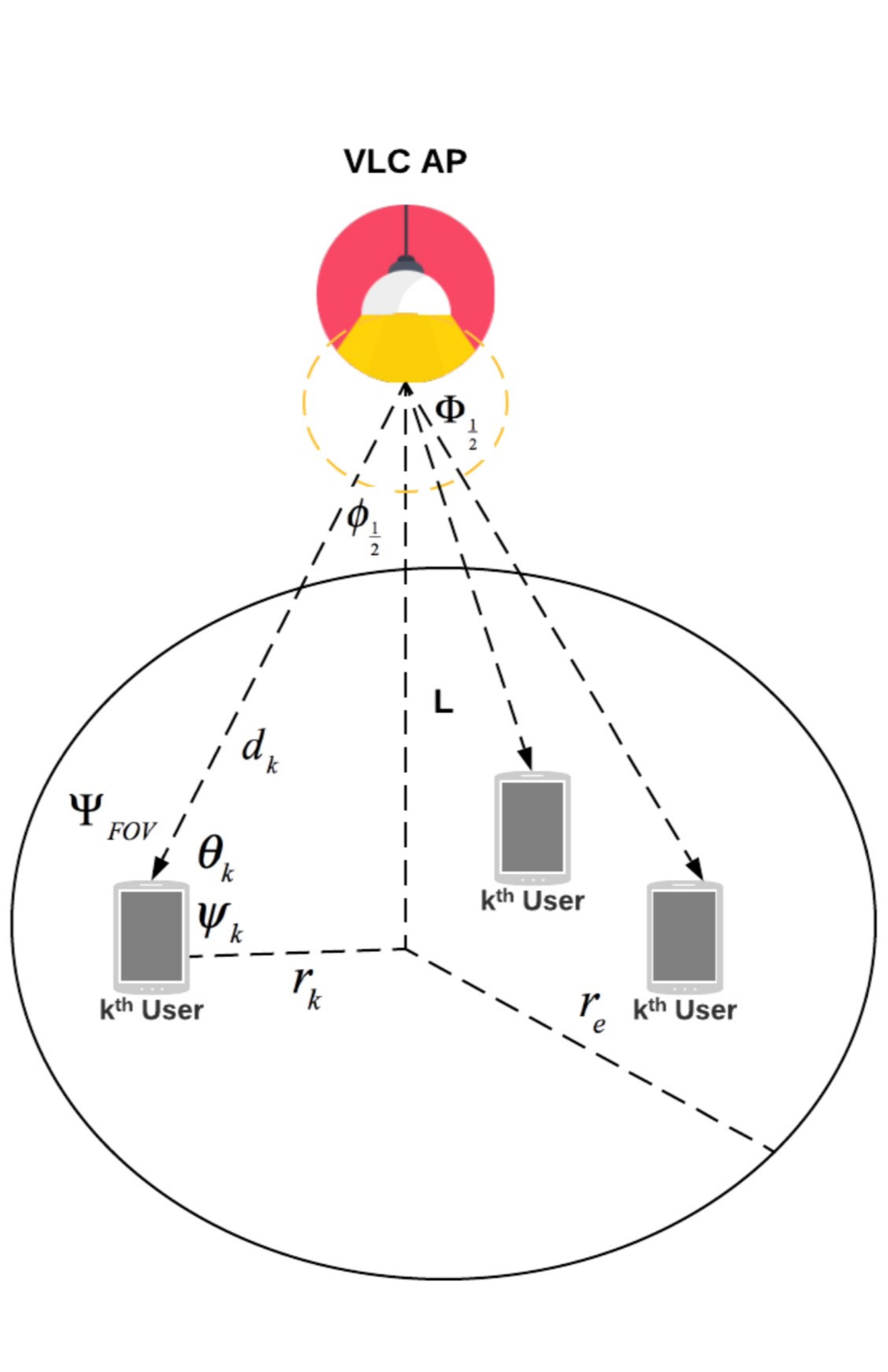}
\caption{VLC Channel Model}
\label{fig2}
\end{figure}
To this effect, in order to decode the signal successively the $k_{\rm{th}}$ user performs SIC in order to remove the signal, s, of the other user(s) with stronger channel condition(s), whereby those signals are treated as noise. To this effect, the achievable data rate per bandwidth is expressed as follows \cite{13}:
\begin{align} \label{7}
 R_{k} = 
 \begin{cases}\log _{2} \left ({ 1+ \frac{ (h_k^{{\rm{VLC}}} \alpha_{k})^{2}}{ \sum \limits _{i=k+1}^{K}(h_k^{{\rm{VLC}}} \alpha_{i})^{2} + \frac {1}{\rho }}}\right ),  & k = 1, \cdots , K-1 \\ 
 $ $ \\
 \log _{2} \left ({ 1+ \rho (h_k^{{\rm{VLC}}} \alpha_{k})^{2}}\right ),  & k = K 
 \end{cases} \notag \\ {}
 \end{align}
where $\rho=P_{e}/{N_0}B$ denotes the signal-to-noise ratio (SNR) at the transmitter. 
It is worth noting that \eqref{7} is conditioned on the requirement that each user  performs the SIC operation successfully.

\subsection{The VLC Channel Model for Uniformly Distributed Users}

The angle of incidence, the angle of irradiance, and the Euclidian distance of the $k_{\rm{th}}$ user in terms of   height $L$ and radical distance $r_k$ are  given by 
\begin{equation}
\cos ({\phi _k}) = \cos ({\psi _k}) =\frac{L}{d_k}
\end{equation}
  and 
  \begin{equation}
  {d_{k}} = \sqrt{r_k^2 + L^2}.
\end{equation}
 Substituting these in \eqref{5}, the DC gain   for the LOS component can be determined, namely 
\begin{equation}
h_k^{{\rm{VLC}}}=\frac {\Xi (m+1)L^{m+1}}{(r_e^{2}+L^{2})^{\frac {m+3}{2}}},  \end{equation}
where 
\begin{equation}
\Xi  = \frac{A{R_p}U({\psi _k})g({\psi _k})}{{2\pi }}
 \end{equation}
is a constant. Furthermore,  because of the uniform distribution of the users, the following probability density function (PDF) is used ${f_{{r_k}}}(r) = 2r/{r_e}$.
Therefore, the PDF of the corresponding channel gain is given by
\begin{align} \label{10}
{f_{{h_k}}}(t) &= \frac{2\left( {\Xi (m + 1){L^{m + 1}}} \right)^{\frac{2}{{m + 3}}}}{{{r_e}^2(p + 3)t^{ \frac{2}{{m + 3}} +1}}}, \, \quad  \,  t \in [{\lambda _{\min }},{\lambda _{\max }}]
\end{align}
where
\begin{align} 
{\lambda _{\min }} = \frac{{{{\Xi^2 (m + 1)^2{L^{2m + 2}}}}}}{{{{({r_e}^2 + {L^2})}^{m + 3}}}}
\end{align}
and 
\begin{align}
 {\lambda _{\max }} =   \frac{\Xi^2 (m + 1)^2 L^{2m + 2}}{L^{2(m + 3)}}. 
\end{align}
Based on this and in order to obtain the corresponding  cumulative distribution function (CDF), we integrate  \eqref{10} over $[{\lambda _{\min }},{\lambda _{\max }}]$, which yields
\begin{align} 
{F_{h_{k}^2}}(t) = 1 + \frac{L^2}{r_e^2} - \frac{(\Xi (m + 1)L^{m + 1})^{\frac{2}{m + 3}}}{r_e^2 t^{\frac{1}{m + 3}} }. 
\end{align}
To this effect and with the aid of  order statistics \cite{book2}, the PDF of the ordered channel gain denoted by ${{f'}_{{h_{k}}}}(t)$ can be obtained as
\begin{align} 
 {{{f'}_{h_{k}^2}}(t)  = \frac{{K! f_{h_k^2}(t)}}{{(k - 1)!(K - k)!}}{F_{h_k^2}}{{(t)}^{k - 1}}{{\left[ {1 - {F_{h_k^2}}(t)} \right]}^{K - k}}{}}, 
\end{align}
which after some algebraic manipulations can be equivalently expressed as follows: 
\begin{align}  \label{new_4}
{{f'}_{h_{k}^2}}(t) =& {\frac{\Omega }{{m + 3}}\frac{{K! {t^{ - \frac{1}{{m + 3}} - 1}}}}{{(k - 1)!(K - k)!}}{{}}}\\
&\nonumber \times \,{{\left( {\frac{\Omega}{{t^{  \frac{1}{{m + 3}}}}} - \frac{{{L^2}}}{{{{r_e}^2}}}} \right)}^{K - k}} \left(1 - \frac{\Omega}{t^{ \frac{1}{m + 3}}} + \frac{L^2}{r_e^2}   \right)^{k - 1}. 
\end{align}
It is evident that \eqref{new_4} has a simple algebraic representation since it is expressed in terms of elementary functions. 
As a result, it is convenient to handle both analytically and numerically.

\subsection{The RF Channel Model} 

Following the same principles, the signal transmitted  by the RF AP can be expressed as 
\begin{align}
x_i^{\rm{RF}} = \sum\limits_{i = 1}^K {\sqrt {P_i^{\rm{RF}}} } S_i^{\rm{RF}}, 
\end{align}
where $S_i^{\rm{RF}}$ is the modulated symbol, which represents the transmitted symbol of the $k_{\rm{th}}$ user.
We also let ${h_K^{\rm{RF}}}$ denote the channel gain from the RF AP to the $k_{\rm{th}}$ user. 
Based on this and without loss of generality, it is assumed that the values of ${h_k^{\rm{RF}}}$ of the $K$ users are perfectly known and are sorted as 
\begin{align}
\left| {h_K^{\rm{RF}}} \right| \ge \left| {h_{K - 1}^{\rm{RF}}} \right| \ge ... \ge \left| {h_k^{\rm{RF}}} \right| \ge ... \ge \left| {h_1^{\rm{RF}}} \right|.
\end{align}
By applying the NOMA principle, the signal received by the $k_{\rm{th}}$ user in the RF channel can be represented  as
\begin{align}  
y_k^{\rm{RF}} = h_k^{\rm{RF}}\sum\limits_{i = 1}^K {\sqrt {{\alpha _{i,r}}P} {s_{i,r}} + {z_{i,r}}},  
\end{align}
where $ h_k^{\rm{RF}}$ stands for  the RF channel gain, namely 
\begin{align}
 h_k^{\rm{RF}} = \frac{{{H_{k}}}}{{d_k^{PL/2}}},
\end{align}
where ${H_{k}} \sim \mathcal{CN} (0,1)$ with  $d_k$ denoting the distance between the RF AP and the $k_{\rm{th}}$ user, and {$\rm{PL}$} denoting the RF path loss exponent.
It is noted here that although the system can achieve its optimal performance when the CSI channel is considered perfect, obtaining a perfect or near-perfect CSI is not technically practical in realistic communication scenarios. Subsequently, it is crucial to obtain an insight into the system performance under realistic conditions in which the CSI channel can be typically imperfect.

\subsection{Imperfect CSI Model} 
 
Unlike the majority of the previous related contributions which assumed perfect CSI knowledge, this work assumes the practical case of imperfect CSI for the underlying RF-VLC system model.
Typically, CSI can be determined at the receiver with the aid of pilots symbols. The quantized channel coefficients are sent to the transmitters over an RF or infrared (IR) uplink. Accordingly, the uncertainty in the VLC channel estimation arises from the noise in the downlink and uplink channels as well as from the mobility of users in indoor environments. Moreover, it is noted that the  analog-to-digital and digital-to-analog (AD/DA) conversion of the channel estimates introduces quantization errors that also contribute to the incurred channel uncertainty, which is ultimately detrimental to the overall system performance.

\subsubsection{VLC Imperfect CSI Model}

By assuming the minimum mean squared error channel estimation model,  the channel coefficient for the VLC link can be represented  as
\begin{align}  
h_k^{{\rm{VLC}}} = \hat h_k^{{\rm{VLC}}} + e_k^{\rm{VLC}}, 
\end{align}
where ${{\hat h}_{k}}$  $\sim \mathcal{N}(0,1 - \sigma _e^2)$ denotes the estimated channel gain and $e_{k}$ is the estimated error in the channel which follows a Gaussian distribution with zero mean and variance $\sigma _e^2$. It is worth nothing that the random variables ${{\hat h}_{k}}$ and $e_{k}$ are uncorrelated and the maximum achievable data rate for the VLC channel according to Shannon formula is given by
\begin{align}    \label{20}
{C_{\rm{VLC}}} =B_{\rm{VLC}}{\log _2}(1 + \rm{SINR}), 
\end{align}
where $B_{\rm{VLC}}$ is the bandwidth of the channel. 
Hence,  the signal-to-interference-plus-noise ratio (SINR) of the $k_{\rm{th}}$ user over the VLC channel under imperfect CSI is given by 
\begin{align} \label{21}
{{\rm{SINR}}_{k}^{{\rm{VLC}}}} =
 \begin{cases}
\frac{\hat{h}_k^2 \alpha_k^2}{\sum\limits_{i = k + 1}^K \hat{h}_k^2 \alpha_i^2 + \frac{1}{\rho} + \sigma_e^2 },  & \quad k = 1, \cdots , K-1 \\
$ $ \\
  1 + \rho \hat{h}_k^2 \alpha_k^2 + \sigma_e^2,  & \quad k = K \end{cases} 
 \notag \\ {}
 \end{align}
Based on this and substituting \eqref{21} into \eqref{20},  the maximum achievable data rate for the $k_{\rm{th}}$ user under imperfect CSI is given by
\begin{align} 
 R_k^{{\rm{VLC}}} =
 \begin{cases}
\log_2 \left(1 + \frac{\hat{h}_k^2 \alpha_k^2}{\sum\limits_{i = k + 1}^K \hat{h}_k^2 \alpha_i^2 + \frac{1}{\rho} + \sigma_e^2 }\right),  \, \,    k = 1, \cdots , K-1 \\
$ $ \\
\log_2 \left(  1 + \rho \hat{h}_k^2 \alpha_k^2 + \sigma_e^2\right),  \qquad \quad  k = K \end{cases} 
 \notag \\ {}
 \end{align}
which also has a simple algebraic representation. 

\subsubsection{RF Imperfect CSI Model} 

Based on the same approach i.e. using the minimum mean squared error (MMSE) channel estimation model, the channel coefficient for the RF link can be similarly modeled as
\begin{align}  
h_k^{{\rm{RF}}} = \hat h_k^{{\rm{RF}}} + e_k^{\rm{RF}}, 
\end{align}
where $ h_k^{\rm{RF}}$  $\sim \mathcal{CN}(0,1 - \sigma _e^2)$ denotes the estimated channel gain and $e_k^{\rm{RF}}$ is the estimated error in the channel, which follows a complex Gaussian distribution with   zero mean and variance 
$\sigma_e^2$. Also,  it is assumed again that the random variables $h_k^{\rm{RF}}$ and $e_k^{\rm{RF}}$ are  uncorrelated. 
To this effect, the maximum achievable data rate for the  channel according to Shannon formula is readily  expressed as 
\begin{align}   \label{25}
{C_{\rm{RF}}} = B_{\rm{RF}}{\log _2}(1 + \rm{SINR}), 
\end{align}
where $B_{\rm{RF}}$ is the bandwidth of the RF channel. 
Also, the SINR for the $k_{\rm{th}}$ user over an RF channel under imperfect CSI is expressed as 
\begin{align} \label{26}
{{\rm{SINR}}_{k}^{{\rm{RF}}}} =
 \begin{cases}
\frac{ \alpha_k (\hat{h}_k^{\rm RF})^2}{\sum\limits_{i = k + 1}^K \alpha_i  (\hat{h}_k^{\rm RF})^2 + \frac{1}{\rho} + \sigma_e^2 },  & \quad k = 1, \cdots , K-1 \\
$ $ \\
  1 + \rho \alpha_k (\hat{h}_k^{\rm RF})^2  + \sigma_e^2,   & \quad  k = K \end{cases} 
 \notag \\ {}
 \end{align}

To this effect, substituting \eqref{26} into \eqref{25}, the maximum achievable data rate for the $k_{\rm{th}}$ user under imperfect CSI is given by
\begin{align}  
{{{R}_{k}^{\rm{RF}}}} = 
 \begin{cases}
\log_2 \left( 1 + \frac{ \alpha_k (\hat{h}_k^{\rm RF})^2}{\sum\limits_{i = k + 1}^K \alpha_i  (\hat{h}_k^{\rm RF})^2 + \frac{1}{\rho} + \sigma_e^2 }\right),  \,     k = 1, \cdots , K-1 \\
$ $ \\
 \log_2 \left( 1 + \rho \alpha_k (\hat{h}_k^{\rm RF})^2  + \sigma_e^2\right),  \,    \qquad  k = K \end{cases} 
 \notag \\ {}
 \end{align}
which has  similarly a  simple algebraic representation as the maximum achievable data rate in the VLC channel.

\section{Average Sum Rate of VLC under imperfect CSI}

In this section, a closed-form sum rate expression for NOMA-based VLC systems  is derived. Additionally,  the sum rate for the corresponding OFDMA-based VLC system is derived for both perfect and imperfect CSI for the sake of benchmarking as it is subsequently compared    with the results of the considered set up.

\subsection{Average Sum Rate of Uniformly Distributed users}
In this subsection, the average sum rate  is derived of both NOMA and OFDMA VLC systems under imperfect CSI, assuming uniformly distributed users.

\subsubsection{NOMA Sum Rate with Imperfect CSI}

The NOMA based sum rate with imperfect CSI is introduced in the following theorem. 

\begin{theorem}
For a $K$ number of uniformly distributed users and an arbitrary power allocation strategy, the average sum rate of NOMA-VLC under imperfect CSI is expressed by the exact closed-form expression in \eqref{27}, at the top of the page, 
\begin{figure*}
\begin{equation} \label{27}
\begin{split}
& \, {{{\hat R}_{{\rm{VLC}}}}^{{\rm{NOMA}}} = \frac{{\Xi K}}{{\ln (2)(m + 3)}}\left\lbrace {\sum\limits_{l = 0}^{K - 1} { \frac{{(K - 1)!{{( - \Xi )}^l}}}{{l!\,(K - 1 - l)!({v_1} + 1)}}{{\left( {\frac{{{L^2}}}{{r_{\rm{e}}^2}} + 1} \right)}^{K - 1 - l}}\left[ {\Omega ({\lambda _{\max }},{v_1},{b_1}) - \Omega ({\lambda _{\min }},{v_1},{b_1})} \right] } } \right.}\\
& {   \qquad \quad \quad  + \sum\limits_{k = 1}^{K - 1} {\sum\limits_{p = 0}^{k - 1} {\sum\limits_{q = 0}^{K - k} { \frac{{(K)!{{(\Xi )}^{p + q}}{{( - 1)}^{p + K - k - q}}}}{{p!\,(k - 1 - p)q!(K - k - q)!}}{{\left( {\frac{{{L^2}}}{{r_{\rm{e}}^2}} + 1} \right)}^{k - 1 - p}}\left. {{{\left( {\frac{{{L^2}}}{{r_{\rm{e}}^2}}} \right)}^{K - k - q}}\left[ {\Omega ({\lambda _{\max }},{v_2},{b_2}) - \Omega ({\lambda _{\min }},{v_2},{b_2})} \right]} \right\rbrace} } } }. 
\end{split}
\end{equation}
\hrulefill
\end{figure*}
where
\begin{equation} \label{new_1}
\begin{split}
\Omega (x,y,z) =& \left( {x(y + 1) + \frac{{{1}}}{{{{ z }^{y + 1}}}}} \right)  \ln (1 +  zx)\\
&   + \sum\limits_{i = 1}^{y + 1} {\left( {\frac{{-{x_{^{y - i + 2}}}{{(1)}^{i - 1}}}}{{\left( {y - i + 2} \right){{z}^{i - 1}}}}} \right)}.
\end{split}
\end{equation}
\end{theorem}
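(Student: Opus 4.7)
The plan is to compute $\bar R_{\mathrm{VLC}}^{\mathrm{NOMA}} = \sum_{k=1}^{K} \mathbb{E}[R_k^{\mathrm{VLC}}]$ by integrating each user's instantaneous rate against the ordered channel-gain PDF $f'_{h_k^2}(t)$ given in (18). That PDF factors as a power law $t^{-1/(m+3)-1}$ multiplied by two polynomial factors in $t^{-1/(m+3)}$ with exponents $K-k$ and $k-1$, so expanding those factors via the binomial theorem will reduce every expectation to a finite sum of integrals of the form $\int_{\lambda_{\min}}^{\lambda_{\max}} \log_2(1 + b\,t)\, t^{v-1}\, dt$, each of which possesses an elementary antiderivative.

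For the strongest user $k=K$, only the second polynomial factor is nontrivial, since $(\,\cdot\,)^{K-k}=1$. I would apply the binomial theorem to $(1 + L^2/r_e^2 - \Omega_0\, t^{-1/(m+3)})^{K-1}$ with $\Omega_0 := (\Xi(m+1)L^{m+1})^{2/(m+3)}$, interchange the finite sum with the integral, and substitute the cell-edge rate $R_K = \log_2(1 + \rho \alpha_K^2 \hat h_K^2 + \sigma_e^2)$. Each resulting term is an integral of the form $\int \log_2(1+b_1 t)\,t^{-(l+1)/(m+3)-1}\,dt$, which I compute by integration by parts (differentiating the logarithm and integrating the power of $t$). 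The resulting antiderivative, once recognized, is exactly $\Omega(t,v_1,b_1)$ as defined in (29) for an appropriate choice of $v_1$ and $b_1$, and evaluating at the endpoints $\lambda_{\min},\lambda_{\max}$ produces the first (single-sum) line of (27).

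For each intermediate user $k \in \{1,\ldots,K-1\}$, I would first rewrite the rate using the difference-of-logarithms identity,
\[
R_k^{\mathrm{VLC}} = \log_2\!\Bigl(\hat h_k^2\sum_{i=k}^{K}\alpha_i^2 + \tfrac{1}{\rho}+\sigma_e^2\Bigr) - \log_2\!\Bigl(\hat h_k^2\sum_{i=k+1}^{K}\alpha_i^2 + \tfrac{1}{\rho}+\sigma_e^2\Bigr),
\]
so that each logarithm is affine in $t = \hat h_k^2$. Both polynomial factors of $f'_{h_k^2}(t)$ are then expanded by the binomial theorem, introducing the expansion indices $p\in\{0,\ldots,k-1\}$ and $q\in\{0,\ldots,K-k\}$ that appear in (27). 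Interchanging the triple sum with the integral reduces the calculation to the same monomial-against-log-of-affine integrals as in the $k=K$ case, which again collapse to the $\Omega$ antiderivative; the slope $b_2$ and exponent $v_2$ absorb the two logs and the combined power $(p+q)/(m+3)$ of $t^{-1/(m+3)}$. Evaluating at the endpoints produces the triple sum on the second line of (27).

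The main obstacle is the bookkeeping: tracking exponents and signs across the two binomial expansions in combination with the difference-of-logarithms identity, and verifying that the resulting antiderivatives consolidate into precisely the function $\Omega(x,y,z)$ defined in (29). This requires iterating integration by parts $y+1$ times and carefully collecting the polynomial correction terms that generate the inner sum $\sum_{i=1}^{y+1} x^{y-i+2}/[(y-i+2)z^{i-1}]$. Once this step is verified, interchanging finite sums with integrals is trivially justified because the integrand is bounded on the compact interval $[\lambda_{\min},\lambda_{\max}]$, and the endpoint evaluations are purely mechanical.
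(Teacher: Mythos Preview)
Your plan matches the paper's proof almost step for step: decompose the sum rate into the $k=K$ contribution (the paper's $V_s$) and the $1\le k\le K-1$ contributions (the paper's $Q_s$), expand the ordered-statistics density via the binomial theorem (one index $l$ for $V_s$, two indices $p,q$ for $Q_s$), and reduce everything to integrals of the form $\int t^{v}\log_2(1+bt)\,dt$; the paper dispatches these by citing the tabulated antiderivative Gradshteyn--Ryzhik~2.729.1 rather than by iterated integration by parts, but the outcome is the same $\Omega$. The one place you are more explicit than the paper is the difference-of-logarithms rewriting for the intermediate users: the paper simply asserts that ``$Q_{si}$ can be deduced using the same previous steps'' and records a single slope parameter $b_2$, whereas your decomposition naturally produces two affine slopes (numerator and denominator) and hence a \emph{pair} of $\Omega$-differences per $(k,p,q)$ term rather than the single one printed in~\eqref{27}; this is a presentational compression in the paper, not a divergence in your method.
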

\begin{proof}The average sum rate of NOMA VLC under imperfect CSI is given by 
\begin{align}   \label{new_2}
&{{\hat R}_{{\rm{VLC}}}}^{{\rm{NOMA}}} = \sum\limits_{k = 1}^M E \left[ {{R_k}} \right]\\
& = \sum\limits_{k = 1}^{K - 1} {\underbrace {\int_{{\lambda _{{\rm{min}}}}}^{{\lambda _{{\rm{max}}}}} {{{\log }_2}\left( {1 + \frac{{{{({\alpha _k})}^2}t}}{{\sum\limits_{i = k + 1}^K {{{({\alpha _i})}^2}} t + \frac{1}{\rho } + \sigma _e^2}}} \right){{f'}_{h_k^2}}\left( t \right)\rm{dt}}}_{{Q_s}}}  \nonumber \\
 & \qquad   + \,\underbrace {\int_{{\lambda _{{\rm{min}}}}}^{{\lambda _{{\rm{max}}}}} {{{\log }_2}\left( {1 + \rho \alpha _K^2t + \sigma _e^2} \right)} {{f'}_{h_K^2}}\left( t \right)\rm{dt}}_{{V_s}} \label{30}
\end{align}
where $Q_s$ denotes the ergodic data rate for the $k_{\rm{th}}$ user, $k \in \{ 1, \cdots ,K - 1\}$, and $V_s$ denotes the ergodic data rate for the $k_{\rm{th}}$ user. 
Therefore, it is evident that the derivation of a closed-form expression for \eqref{new_2} is subject to analytic solutions of the involved two integrals. To that end, applying the binomial expansion in the above, $V_s$ can be written as:
\begin{equation}\label{31}
\begin{split}   
{V_s} = \frac{{\Xi K}}{{2(m + 3)}}\sum\limits_{l = 0}^{K - 1}   \frac{{(K - 1)!{{( - \Xi )}^l}}}{{l!\,(K - 1 - l)!}}{\left( {\frac{{{L^2}}}{{r_{\rm{e}}^2}} + 1} \right)^{K - 1 - l}}\\
 \times \,\underbrace {\int_{{\lambda _{{\rm{min}}}}}^{{\lambda _{{\rm{max}}}}} {{{\log }_2}} \left[ {1 + \left( {\rho {{({}{\alpha _k})}^2}t + \sigma _e^2} \right)} \right]{t^{ - \frac{{l + 1}}{{m + 3}} - 1}}{\rm{dt}}}_{Vsi}. 
\end{split}
\end{equation}
Evidently, we have to solve the integral $V_{si}$ in order to deduce the final solution.
To this end, using \cite[2.729.l]{book2}, and by defining
\begin{equation}
 b_1 =  {\rho {{({\alpha _k})}^2} + \sigma _e^2}  
\end{equation}
and
\begin{equation}
v_1 =  - \frac{{l + 1}}{{m + 3}} - 1
\end{equation}
$V_{\rm{si}}$ can be readily obtained as follows:
\begin{equation} \label{32}
\begin{split}  
&{{V_{si}} = \frac{1}{{\ln (2)({v_1} + 1)}} \left( {{t^{{v_1} + 1}} - \frac{{{{( - a)}^{{v_1} + 1}}}}{{{{\left( {\rho {{({\alpha _k})}^2} + \sigma _e^2} \right)}^{{v_1} + 1}}}}} \right)}\\&
{ \qquad  \times \ln \left( {1 + \left( {\rho {{({\alpha _k})}^2} + \sigma _e^2} \right)t} \right)}\\ &
{ + \frac{1}{{\ln (2)({v_1} + 1)}}\sum\limits_{i = 1}^{{v_1} + 1} {\left( {\frac{{{{( - 1)}^i}{t^{{v_1} - i + 2}}{a^{i - 1}}}}{{\left( {{v_1} - i + 2} \right){{\left( {\rho {{({\alpha _k})}^2} + \sigma _e^2} \right)}^{i - 1}}}}} \right)}.}
\end{split}
\end{equation}
Based on this and substituting \eqref{32} into \eqref{31}, the following expression is obtained
\begin{equation} \label{33}
\begin{split}   
{V_s} =   \Xi K  \sum\limits_{l = 0}^{K - 1}  & {\left\{  {\frac{{(-1)^l(K - 1)!{{ \Xi }^l}}}{{l!\,(K - 1 - l)!({v_1} + 1)}}{{\left( {\frac{{{L^2}}}{{r_{\rm{e}}^2}} + 1} \right)}^{K - 1 - l}}} \right.} \\
& \left. \,   { \times \frac{  {\Omega ({\lambda _{\max }},{b_1},{v_1}) - \Omega ({\lambda _{\min }},{b_1},{v_1})}  }{\ln (2)(m + 3)}} \right\},
\end{split}
\end{equation}
where
\begin{align}  
\nonumber \Omega (x,y,z) = \left( {{x_{^{y + 1}}} - \frac{{{{( - 1)}^{y + 1}}}}{{{{\left( z \right)}^{y + 1}}}}} \right) \ln (1 + \left( z \right)x) \\
+ \sum\limits_{i = 1}^{{v_1} + 1} {\left( {\frac{{{{( - 1)}^i}{x_{^{{v_1} - i + 2}}}{{(1)}^{i - 1}}}}{{\left( {y - i + 2} \right){{\left( z \right)}^{i - 1}}}}} \right)}.
\end{align}
Similarly, by applying binomial expansion for the integral $Q_s$ in \eqref{30}, we obtain
\begin{align} 
\nonumber {Q_s} =\frac{\Xi }{{(m + 3)}}\sum\limits_{k = 1}^{K - 1} {\sum\limits_{p = 0}^{k - 1} {\sum\limits_{q = 0}^{K - k} }}{\frac{{K!{{\Xi }^{p + q}}{{( - 1)}^{p + K - k - q}}}}{{p!\,(k - 1 - p)q!(K - k - q)!}}}
\end{align}
\begin{align} 
\times {{\left( {\frac{{{L^2}}}{{r_{\rm{e}}^2}} + 1} \right)}^{k - 1 - p}}{{\left( {\frac{{{L^2}}}{{r_{\rm{e}}^2}}} \right)}^{K - k - q}}{Q_{si}},
\end{align}
where 
\begin{equation}
{Q_{si}} = \int_{{\lambda _{{\rm{min}}}}}^{{\lambda _{{\rm{max}}}}} {{{\log }_2}\left( {1 + \frac{{{{({\alpha _k})}^2}t}}{{\sum\limits_{i = k + 1}^K {{{({\alpha _i})}^2}} t + \frac{1}{\rho } + \sigma _e^2}}} \right)}{\rm{dt}}. 
\end{equation}
Next, $Q_{\rm{si}}$ can be deduced using the same previous steps. Hence, by first defining the following parameters:
\begin{align}
\large
b_2 = \frac{{\rho ({{({\alpha _k})}^2} - \sigma _e^2)}}{{\sum\limits_{i = k + 1}^K {{{{\alpha _i^2}}}} }}
\end{align}
and
\begin{align}
v_2 =  - \frac{{p + q + 1}}{{m + 3}} - 1
\end{align}
and carrying out  some algebraic manipulations yields \eqref{36}, at the top of the next page. 
\begin{figure*}
\begin{equation}   \label{36}
\begin{split}
 {Q_s} =   \sum\limits_{k = 1}^{K - 1} {\sum\limits_{p = 0}^{k - 1} {\sum\limits_{q = 0}^{K - k} { \frac{{K!{{\Xi}^{p + q+1}}{{( - 1)}^{p + K - k - q}(L^2 + r_e^2)^{ k - p - 1} L^{2(K - k - q)}}}}{{ \ln (2)(m + 3) p!\,(k - 1 - p)q!(K - k - q)! r_e^{2(K - q - p - 1)}}}{{ }{}}} } } 
 { {{}{}} \times \{ {\Omega ({\lambda _{\max }},b2,v2) - \Omega ({\lambda _{\min }},b2,v2)} \} }.
\end{split}
\end{equation}
\hrulefill
\end{figure*}
Hence, by substituting \eqref{36} and \eqref{33} into \eqref{30} yields  \eqref{27}, which completes the proof.
\end{proof}
It is noteworthy that the results reported in \cite{13} include complex and constrained  special functions. In addition, they are limited to the simplistic assumption of perfect CSI knowledge as they do not take into account any incurred CSI errors. 
Hence, the consideration of  CSI errors in Theorem 1 is more practical and therefore more useful when considering realistic communication scenarios in emerging technologies that are typically characterized by demanding requirements and stringent quality of service targets. 
Another advantage of the offered result in Theorem 1 is its simple algebraic form since it does not include special functions. This renders it a versatile result since it is convenient to handle both analytically and numerically.

\subsubsection{OFDMA Sum Rate with Imperfect CSI}

In what follows we  quantify  the  sum rate for the  corresponding OFDMA counterpart.

\begin{theorem}
For $K$ number of uniformly distributed users, and an arbitrary power allocation strategy, the average sum rate of OFDMA VLC under imperfect CSI can be expressed by the closed-form  representation in
\begin{figure*}
\begin{equation}\label{40}
 {{\hat R}_{{\rm{VLC}}}}^{{\rm{OFDMA}}} =  \sum\limits_{k = 1}^K {\sum\limits_{p = 0}^{k - 1} {\sum\limits_{q = 0}^{K - k} { \frac{{K!{{\Xi}^{p + q + 1}}{{( - 1)}^{p + K - k - q} \left[ {\Omega ({\lambda _{\max }},{v_3},{b_3}) - \Omega ({\lambda _{\min }},{v_3},{b_3})} \right] }}}{{2\ln (2) p!\,(k - 1 - p)q!(K - k - q)!({v_3} + 1) (m + 3) }}{{\left( {\frac{{{L^2}}}{{r_{\rm{e}}^2}}} \right)}^{K - k - q}}} } } {{\left( {\frac{{{L^2}}}{{r_{\rm{e}}^2}} + 1} \right)}^{k - 1 - p}} 
\end{equation}
\hrulefill
\end{figure*}
 \eqref{40}, at the top of the next page.
\end{theorem}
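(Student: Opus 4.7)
The plan is to adapt the argument of Theorem 1 to the OFDMA setting, exploiting the key structural simplification that orthogonal subcarrier assignment eliminates inter-user interference. Thus each user sees only its own signal plus noise and CSI error, and the per-user rate under imperfect CSI takes the form $R_k^{\rm OFDMA}=\tfrac{1}{K}\log_2\!\bigl(1+b_3\,\hat{h}_k^2\bigr)$, where $b_3$ depends on $\alpha_k$, $\rho$ and $\sigma_e^2$ and \emph{not} on the powers of other users. This removes the $Q_s/V_s$ split needed in Theorem 1: instead of splitting off the $k=K$ user and handling $k<K$ with an interference-laden integrand, one integrand type covers all $K$ users simultaneously.

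First I would write the average sum rate as
\begin{equation*}
\hat{R}_{\rm VLC}^{\rm OFDMA}=\sum_{k=1}^{K}\int_{\lambda_{\min}}^{\lambda_{\max}}\log_2\!\bigl(1+b_3\,t\bigr)\,f'_{h_k^2}(t)\,\mathrm{d}t,
\end{equation*}
with $f'_{h_k^2}(t)$ taken directly from \eqref{new_4}. Next, following exactly the manipulation used for $Q_s$ in the proof of Theorem 1, I would apply the binomial theorem to both $\bigl(\Omega/t^{1/(m+3)}-L^2/r_e^2\bigr)^{K-k}$ and $\bigl(1-\Omega/t^{1/(m+3)}+L^2/r_e^2\bigr)^{k-1}$. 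This converts the integrand into a finite double sum (indexed by $p,q$) of monomials $t^{v_3}$ multiplying $\log_2(1+b_3 t)$, with the exponent $v_3=-\tfrac{p+q+1}{m+3}-1$ matching the $v_2$ already seen in \eqref{36}.

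The remaining step is the scalar integral $\int t^{v_3}\log_2(1+b_3 t)\,\mathrm{d}t$, which is precisely the primitive given by \cite[2.729.l]{book2} and produces the function $\Omega(x,y,z)$ defined in \eqref{new_1}, evaluated at the endpoints $\lambda_{\max}$ and $\lambda_{\min}$. Gathering the constants $\Xi K/(m+3)$ from $f'_{h_k^2}$, the binomial coefficients, the factor $1/\ln(2)$ from the change of base, and the $1/(v_3+1)$ produced by the integration by parts inside formula 2.729.l, yields the announced closed form \eqref{40}.

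The main obstacle I anticipate is bookkeeping rather than any genuine analytic difficulty: correctly tracking the signs produced by the two binomial expansions (the $(-1)^{p+K-k-q}$ factor), consolidating the powers of $L$, $r_e$ and $\Xi$ so that the final prefactor matches the stated one, and verifying that the summation range $k=1,\dots,K$ is legitimate (no user needs to be singled out because, unlike NOMA, the OFDMA expression is uniform in $k$). Once these index manipulations are done, identifying the parameters $b_3$ and $v_3$ used in $\Omega(\cdot,v_3,b_3)$ and substituting back completes the proof.
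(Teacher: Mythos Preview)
Your proposal is correct and follows essentially the same route as the paper: write the OFDMA sum rate as a single sum of integrals of the form $\int\log_2(1+b_3 t)\,f'_{h_k^2}(t)\,\mathrm{d}t$ (no $Q_s/V_s$ split is needed since every user's rate has the same interference-free shape), expand the two order-statistics factors in $f'_{h_k^2}$ binomially to reduce to monomials $t^{v_3}$, and evaluate the resulting scalar integral via \cite[2.729.l]{book2} to obtain $\Omega(\lambda_{\max},v_3,b_3)-\Omega(\lambda_{\min},v_3,b_3)$. The only cosmetic mismatch is that the paper parameterizes the per-user prefactor and SINR via the bandwidth/power fractions $w_k$ and $b_{nk}$ (so $b_3=\rho w_k/\bigl(b_{nk}(1+\rho\sigma_e^2)\bigr)$ and an extra $w_k/2$ multiplies the log), rather than your $1/K$ and $\alpha_k$; this is exactly the bookkeeping you flagged and does not affect the argument.
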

\begin{proof}
The average sum rate of OFDMA VLC under imperfect CSI is obtained by the following representation:
\begin{align}   
 {{\hat R_{\rm{VLC}}}^{\rm{OFDMA}}} &= \sum\limits_{k = 1}^K  \int_{\lambda_{\rm min}}^{\lambda_{\rm max}}  \log_2 \left(1 + \frac{\rho w_k t}{b_{nk}(1 + \rho \sigma_e^2)} \right) \frac{ w_k f'_{h_k^2}(t)}{2} \rm{dt}
\end{align}
where $w_k$ is the fraction of bandwidth occupied by the $k_{\rm{th}}$ user, and $b_{n_{k}}$ is the fraction of the power allocated to the $k_{\rm{th}}$ user.
By applying the binomial expansion, the average sum rate of OFDMA VLC under imperfect CSI is given by 
\begin{figure*}
\begin{flalign}    \label{39}
\nonumber {\hat R_{\rm{VLC}}}^{\rm{OFDMA}}  &=  \frac{ \Xi }{{2(m + 3)}}{\sum\limits_{k = 1}^K {\sum\limits_{p = 0}^{k - 1} \sum\limits_{q = 0}^{K - k} { \frac{{K!{{\Xi}^{p + q}}{{( - 1)}^{p + K - k - q}}}}{{p!\,(k - 1 - p)q!(K - k - q)!}} }}}
\nonumber    {{\left( {\frac{{{L^2}}}{{r_{\rm{e}}^2}} + 1} \right)}^{k - 1 - p}} \left( {\frac{{{L^2}}}{{r_{\rm{e}}^2}}} \right)^{K - k - q}\\
&  \qquad {\times \sum\limits_{k = 1}^K {\underbrace {\int_{{\lambda _{{\rm{min}}}}}^{{\lambda _{{\rm{max}}}}} {  {\frac{1}{2}{w_k}{{\log }_2}\left( {1 + \frac{w_k \rho t}{b_{nk} \sigma_e^2} } \right)
{{f'}_{h_k^2}}(t){\rm{d}}t} } }_{{T_s}}} }&&
\end{flalign}
\hrulefill
\end{figure*}
\eqref{39} at the top of the next page, where $T_{s}$ can be expressed as 
\begin{align}  \label{38}
T_{s} = \frac{{\Omega ({\lambda _{\max }},{v_3},{b_3}) - \Omega ({\lambda _{\min }},{v_3},{b_3})}}{{2\ln (2)}}. 
\end{align}
Subsequently, using \cite[2.729.l]{30}, and by defining:
\begin{align}
b_3 = \frac{\rho w_k}{b_{nk} (1 + \rho \sigma_e^2)}
\end{align}
and
\begin{align}
v_3 =  - \frac{{p + q + 1}}{{(m + 3)}} - 1, 
\end{align}
it follows that the exact closed-form expression in \eqref{40} is obtained by substituting \eqref{38} into \eqref{39}, which completes the proof of the theorem.
\end{proof}

\begin{figure*}
\begin{equation} \label{new_3}
\begin{split}
{{\hat R}_{\rm{RF}}}^{\rm{NOMA}} \approx  & 
\sum_{k = 2}^{K} \sum_{r = 0}^{k - 1}      \binom{K}{k} \binom{k - 1}{r}  \frac{k (-1)^r}{\ln(2) (r + k)} \sum_{j = 1}^{r + K - k + 1}  \binom{r + K - k + 1}{j} \frac{(-1)^j \pi^j}{n^j D^j} \phi(\beta_{R, k-1})\\
& - \sum_{k = 1}^{K} \sum_{r = 0}^{k - 1}      \binom{K}{k} \binom{k - 1}{r}  \frac{k (-1)^r}{\ln(2) (r + k)} \sum_{j = 1}^{r + K - k + 1}  \binom{r + K - k + 1}{j} \frac{(-1)^j \pi^j}{n^j D^j} \phi(\beta_{R, k})
\end{split}
\end{equation}
\hrulefill
\end{figure*}

\section{Average Sum Rate of NOMA-RF under imperfect CSI}

The average sum rate of a NOMA based RF system with imperfect CSI can be obtained using \eqref{new_3} at the top of the page, where \cite{26}
\begin{equation} 
\phi (x) = \sum\limits_{{t_1} +  \cdots  + {t_n} = j} {\frac{{j!}}{{{t_1}!{t_2}! \cdots {t_n}!}}} \prod\limits_{i = 1}^n {{{\left( {\left| {\sin \frac{{2i - 1}}{{2n}}\pi } \right|{x_i}} \right)}^{{t_i}}}} h(x) 
\end{equation} 
with $j$ denoting  the summation of all sequences of non-negative integer indices from $t_1$ until $t_n$, and 
\begin{equation}
\begin{split}
h(x) =& \exp \left( {\frac{{(\sigma _z^2 + {P_{R,k}}\sigma _\varepsilon ^2)}}{{{P_{R,k}}z}}\sum\limits_{i = 1}^n {\frac{{{t_i}}}{{x_i^{ - {\beta _R}} - \sigma _\varepsilon ^2}}} } \right)\\
& \times {E_1}\left( {\frac{{(\sigma _z^2 + {P_{R,k}}\sigma _\varepsilon ^2)}}{{{P_{R,k}}z}}\sum\limits_{i = 1}^n {\frac{{{t_i}}}{{x_i^{ - {\beta _R}} - \sigma _\varepsilon ^2}}} } \right)
\end{split}
\end{equation}
where $E_1(\cdot)$ denotes the exponential integral \cite{Tables}.  This function can be readily computed because it is a standard built in function in popular scientific software packages such as MAPLE, MATLAB and MATHEMATICA.  

To the best of the authors knowledge, the offered results have not been reported in the open literature. 
Capitalizing on the results, the  performance of the considered set up in terms of the average sum rate of NOMA-RF under imperfect CSI is quantified in detail in th enumerical results section, where useful theoretical and practical insights are developed. 

\section{Energy Efficiency of Hybrid NOMA VLC-RF System}

In this section, the total achievable data rate and the energy efficiency of the NOMA-VLC-RF network are derived assuming imperfect knowledge of the CSI.
Following \cite{25}, we define the total average sum rate in a hybrid VLC-RF network as follows:
\begin{equation} \label{44}
{R_{\rm{SUM}}} = {B_{\rm{RF}}}\left( {{\beta _{\rm{RF}}}\hat R_{\rm{RF}}^{\rm{NOMA}}} \right) + {B_{\rm{VLC}}}\left( {{\beta _{\rm{VLC}}}\hat R_{\rm{VLC}}^{\rm{NOMA}}} \right), 
\end{equation}
where ${\beta_{\rm{RF}}}$ is the probability of available LOS (dominant) component  for the RF link, and ${\beta_{\rm{VLC}}}$ is the corresponding  LOS availability probability for the VLC link.
It is noted that the above representation is useful in determining the   average energy efficiency of the considered hybrid set up.
\begin{corollary}
The average energy efficiency of the considered  hybrid VLC-RF network can be determined as follows:
\begin{align} \label{45}
\hat \xi  = {\rm{ }}\frac{{{B_{\rm{RF}}}\left( {{\beta _{\rm{RF}}}\hat R_{\rm{RF}}^{\rm{NOMA}}} \right) + {B_{\rm{VLC}}}\left( {{\beta _{\rm{VLC}}}\hat R_{\rm{VLC}}^{\rm{NOMA}}} \right)}}{{{Q_{\rm{VLC}}} + {Q_{\rm{RF}}} + \sum\limits_{i = 1}^N {{P_{\rm{RF},i}}} }}.
\end{align}
\end{corollary}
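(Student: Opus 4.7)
The plan is to derive the energy efficiency expression as an almost immediate consequence of its standard operational definition, namely the ratio of the aggregate achievable throughput (in bits per second) delivered by the network to the total instantaneous power (in watts) expended in delivering that throughput. First I would recall this definition, $\hat\xi = R_{\rm SUM}/P_{\rm total}$, and then assemble the numerator and denominator separately by drawing on the quantities already introduced in the preceding sections.

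For the numerator, I would invoke equation \eqref{44} directly: the total average sum rate $R_{\rm SUM}$ is the bandwidth-weighted combination of the NOMA-VLC average sum rate from Theorem 1 and the NOMA-RF average sum rate from \eqref{new_3}, each scaled by the probability of a dominant LOS component being available on its respective link ($\beta_{\rm VLC}$ and $\beta_{\rm RF}$). Since the mobile terminal is assumed to have multi-homing capability and can aggregate resources from both the optical and radio domains simultaneously, the two contributions are additive and no further derivation is needed.

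For the denominator, I would enumerate every power source contributing to the hybrid network's consumption. There are two fixed baseline terms, $Q_{\rm VLC}$ and $Q_{\rm RF}$, associated with the operation of the VLC and RF access points respectively, and a dynamic per-user term $\sum_{i=1}^{N} P_{{\rm RF},i}$ accounting for the transmit power allocated by the RF AP across its $N$ served users. The VLC transmit contribution is already embedded in $Q_{\rm VLC}$ through the LED DC bias and the electrical power $P_{\rm e}$ introduced in Section II-A, so it need not be listed separately; the power-allocation constraint in \eqref{45} is consistent with the budget in equation (1).

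The only step requiring any care is the bookkeeping in the denominator, that is, cleanly separating fixed operational consumption from dynamic transmit consumption and arguing that no term is either double-counted or omitted; once this accounting is fixed, forming the ratio of \eqref{44} to $Q_{\rm VLC}+Q_{\rm RF}+\sum_{i=1}^{N}P_{{\rm RF},i}$ yields \eqref{45} immediately, which completes the proof.
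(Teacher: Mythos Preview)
Your proposal is correct and follows essentially the same approach as the paper: define energy efficiency as the ratio of the total achievable sum rate to the total consumed power, take \eqref{44} as the numerator, take $Q_{\rm VLC}+Q_{\rm RF}+\sum_{i=1}^{N}P_{{\rm RF},i}$ as the denominator, and form the ratio to obtain \eqref{45}. The paper's own proof is actually briefer than yours, omitting the justification for additivity and the power-accounting discussion; one small slip in your write-up is the reference to ``the power-allocation constraint in \eqref{45},'' which points to the result itself rather than the budget constraint you presumably meant.
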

\begin{proof}
The total achievable data rate of the considered scheme  is expressed in bits/s, whereas the energy efficiency is expressed  in bit/J. Accordingly, the energy efficiency of the underlying scheme is given by:
\begin{align} \label{46}
\xi  = \frac{{R_{\text{{\sc{SUM}}}}}}{{{Q_{\rm{VLC}}} + {Q_{\rm{RF}}} + \sum\limits_{i = 1}^N {{P_{\rm{RF},i}}} }}, 
\end{align}
where $R_{\rm{Sum}}$ denotes the total achievable sum rate of NOMA-VLC-RF and  $\sum\nolimits_{i = 1}^N {{P_{\rm{RF},i}}} $ is the total transmission power of the RF network for $N$ users. Based on this and by substituting  \eqref{44} into \eqref{46},  equation  \eqref{45} is deduced, which completes the proof.
\end{proof}

\section{Results and Discussion}
\indent 
\begin{table}[]
\centering
\caption{Simulation parameters}
\label{my-label}
\begin{tabular}{|l|l|}
\hline
\textbf{Parameter}                     & \textbf{Value}                     \\ \hline
\small Vertical separation between the LED and PDs,  $L$ & \small 2.15 m                             \\ \hline
\small Cell radius, $r_e$                               & \small 3.6 m                              \\ \hline
\small LED semi-angle                                   & \small 45$^\circ$                                \\ \hline
\small Total signal power, $P_{e}$                  &\small  0.25 W                             \\ \hline
\small PD FOV, ${\Psi _{fov}}$                     &\small  60$^\circ$                                 \\ \hline
\small PD responsivity $R_p$                            & \small 0.4 A/W                            \\ \hline
\small PD detection area, $A$                           &\small  1 ${\rm{cm}}^{2}$                  \\ \hline
\small Reflective index, $n$                            &\small  1.5                                \\ \hline
\small Optical filter gain, $T$                         &\small  1                                  \\ \hline
\small Signal bandwidth, $B$                            & \small 20 MHz                             \\ \hline
\small VLC AP fixed power consumption, $Q_{\rm{VLC}}$                            & \small 4 Watt                             \\ \hline
\small RF AP fixed power consumption, $Q_{\rm{RF}}$                            & \small 6.7 Watt                             \\ \hline
\small Noise PSD, $N_0$                                 & \small ${\rm{20}}^{-21}$ ${\rm{A}}^2$ /Hz \\ \hline
\end{tabular}
\end{table}
In this section, we utilize the derived analytic results in the  proposed theoretical framework alongside the corresponding  Monte Carlo simulation for validation purposes. To that end and unless otherwise stated, the analytical results in the presented figures are represented by solid lines whereas the corresponding Monte Carlo simulation results by are represented by markers.
 Our results exhibit  a perfect match between analysis and simulations in all consider cases, which justifies the validity of the derived simple closed-form expressions.
The default simulation parameters in the indicative demonstrated scenarios are depicted in Table I, unless otherwise specified.

\subsection{Sum Rate of VLC using NOMA/OFDMA}

Fig. 3 demonstrates the achieved sum rate as a function of the LED semi-angle in the VLC system with a different number of users, and different $\sigma_e$ values. It is clear from Fig. 3 that the CSI estimation error   impacts considerably the overall performance of the system. Specifically, it is shown that the maximum sum rate is decreased from 3.19 bits per channel unit (bpcu) to 1.36 bpcu a 57\% decrease ratio for the $K=2$ case. As for the considered cases of 5 and 10 users, almost the same reduction ratio applies as the CSI estimation error is increased.
Moreover, it is  noticed that high estimation errors lead to a diminishing difference between the sum rate of 2, 5 and 10 users. In other words, as the CSI estimation error increases, the channels between the users become less distinctive, which leads to minimal performance gain as the number of users  increases. This is a critical finding since NOMA and  VLC scenarios relies crucially on the state of the involved wireless channels. 
\begin{figure}
  \centering
\includegraphics[height=6.5cm, width=9cm]{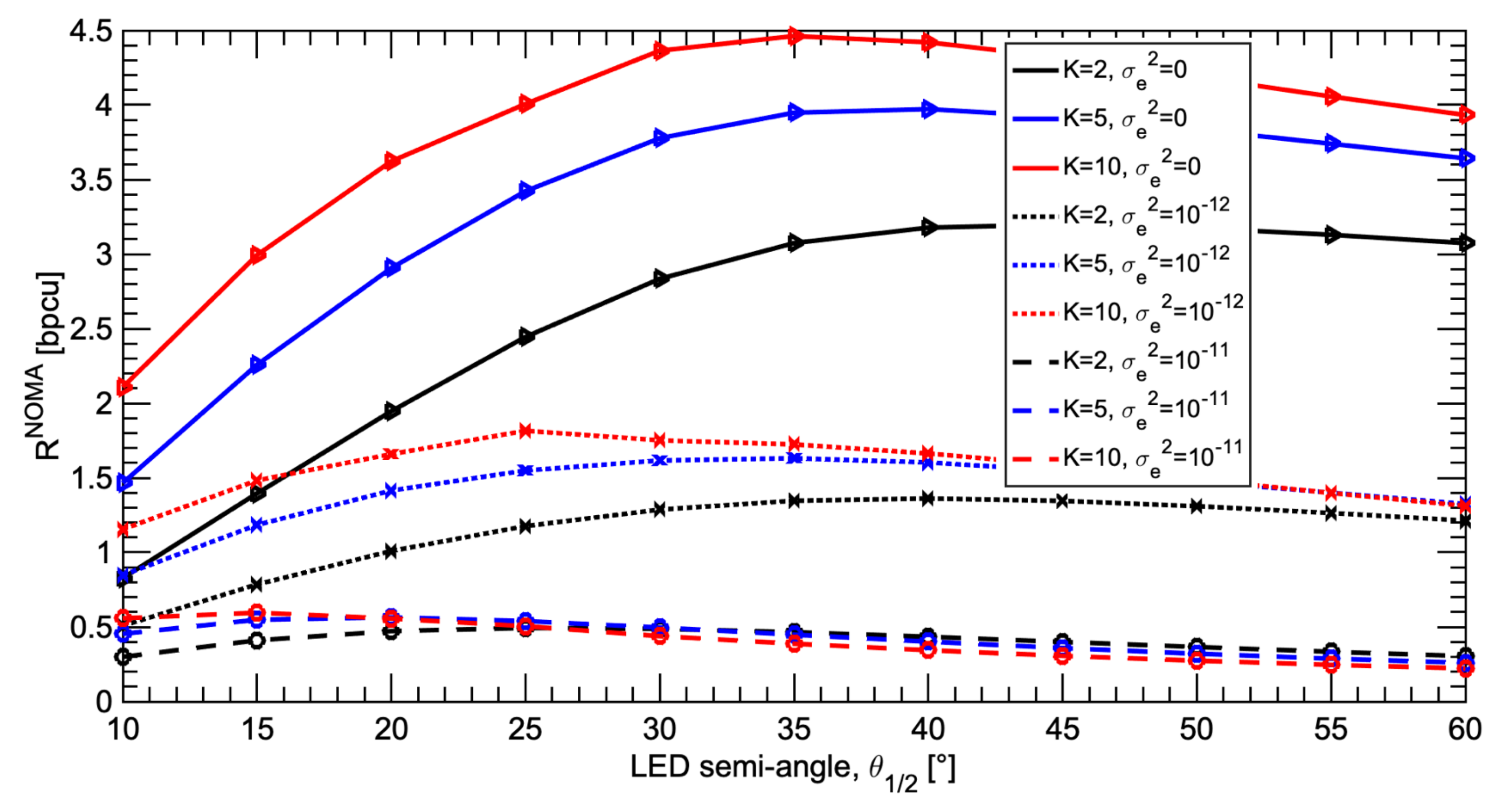}
\caption{Sum rate vs. semi-angle for LED in NOMA-VLC system.}
\label{fig2}
\end{figure}
 In the  same context, Fig. 4 illustrates the corresponding sum rate  as a function of the transmit SNR $\rho$ for  $K=2$ and $K=5$   and different $\sigma_e$ values. The first key observation is that the impact of the CSI error is rather minimal in the region of low SNR, and it becomes more pronounced as the transmit SNR $\rho$ increases. Secondly, there is an upper ceiling bound for the sum rate, leading to a flat sum-rate value.
Next, the cost of CSI errors is increased as the transmit SNR increases. For instance, at 180 dB for $K=5$, the CSI error of $-120$ dB decreases the sum rate from 12 bpcu to 3.9 bpcu, whereas in the 150 dB case, the sum rate is reduced from 7.5 bpcu to 3.9 bpcu.
Finally, in the case of CSI $-110$ dB, the performance is almost flat with a very low sum rate (around 0.2 bpcu). This verifies the importance of taking the incurred CSI imperfections into account during the  design process in order to achieve reliable and robust operation of NOMA based VLC  systems. 

\indent Fig. 5 shows the sum-rate versus $\rho$ for both NOMA-VLC and OFDMA-VLC set ups  with $K=5$ users  under different CSI error values.
As expected, NOMA outperforms OFDMA throughout the entire  SNR range considered. Moreover, the performance gain of NOMA over OFDMA is superior under the perfect CSI scenario. For example, in the case of $\rho=180$ dB, the sum rate of NOMA is almost double the sum rate of OFDMA (12.1 for the former vs. 6.4 for the latter). Interestingly, for the SNR region between 110 dB and 130 dB, NOMA with imperfect CSI and $-120$ dB outperforms OFDMA with perfect CSI. This is in fact understandable due to the performance gain for $K=5$.
Similar to NOMA, there is a limiting upper bound for OFDMA which is reached at lower SNR values (e.g. 110 dB) as compared to NOMA (e.g. 130 dB).

Fig. 6 demonstrates  the impact of $K$ on the sum rate, under different CSI error values.
As expected for NOMA-VLC with perfect CSI, the number of users is directly proportional to the sum rate, which increases from 6.4 bpcu to 7.8 bpcu, when the number of users increased from 2 to 7. However, for the case of imperfect CSI, the performance gain becomes almost flat even at an  increase of  the number of users.
This is a rather important observation since the imperfect CSI is highly disadvantageous for NOMA and a clear bottleneck. 
Yet, for OFDMA, even though it appears to be  almost flat at a change from $K=2$ to $K=7$, it can be noticed that the sum-rate decreases from 3 bpcu in the perfect CSI case to less than 0.72 bpcu and 0.59 bpcu for the cases of $-120$ dB and $-130$ dB, respectively.
Likewise, Fig. 7 shows the impact of vertical distance from the LED on the NOMA-VLC sum rate for different CSI error values.
In the case of perfect CSI, there is an apparent reduction in the sum rate from 3.76 bpcu at $L=2$ to 2.8 bpcu at $L=4.25$m. 
However, in the case of imperfect CSI  with $\sigma_e=10^{-12}$, the reduction ratio still holds from 1.5 bpcu to 0.9 bpcu.
Finally, in the case of $\sigma_e=10^{-12}$, the performance impact of $L$ becomes smaller as it varies by 0.2 m.

Fig. 8 illustrates  the sum rate as a function of $\sigma_e$. First, we can quantify  how susceptible  the VLC system is to the CSI error, since increasing it from $-120$ dB to $-110$ dB  results to a sum rate reduction from 1.6 bpcu to 0.35 bpcu for $K=5$, and from 1.4 bpcu to 0.4 bpcu for  $K = 2$. Therefore, the impact of the CSI error parameter is substantial, especially when $K=5$.
Finally, Fig. 9 shows a comparison between the OFDMA-VLC and the NOMA-VLC sum-rate performance, with respect to the corresponding  CSI error. 
It is evident  that both cases are highly susceptible to incurred CSI errors.
 For NOMA-VLC with $K=2$, the sum rate drops from  11.4 bpcu to 0.06 bpcu, as the CSI error increases from $-220$  dB to $-110$ dB. Likewise, for OFDMA based VLC, there is a dramatic decrease, which is from 10.6 bpcu to 0.07 bpcu throughout the same CSI error range.
Yet, the advantage of NOMA prevails when increasing the number of users from 2 to 5, in which case  the achieved NOMA sum rate increases from 11.4 bpcu to 12.3 bpcu, whereas OFDMA based sum rate suffers from performance degradation as it decreasing from 10.6 bpcu to 4.6 bpcu due to the orthogonality.
\begin{figure}
\centering
\includegraphics[height=7cm,width=9cm]{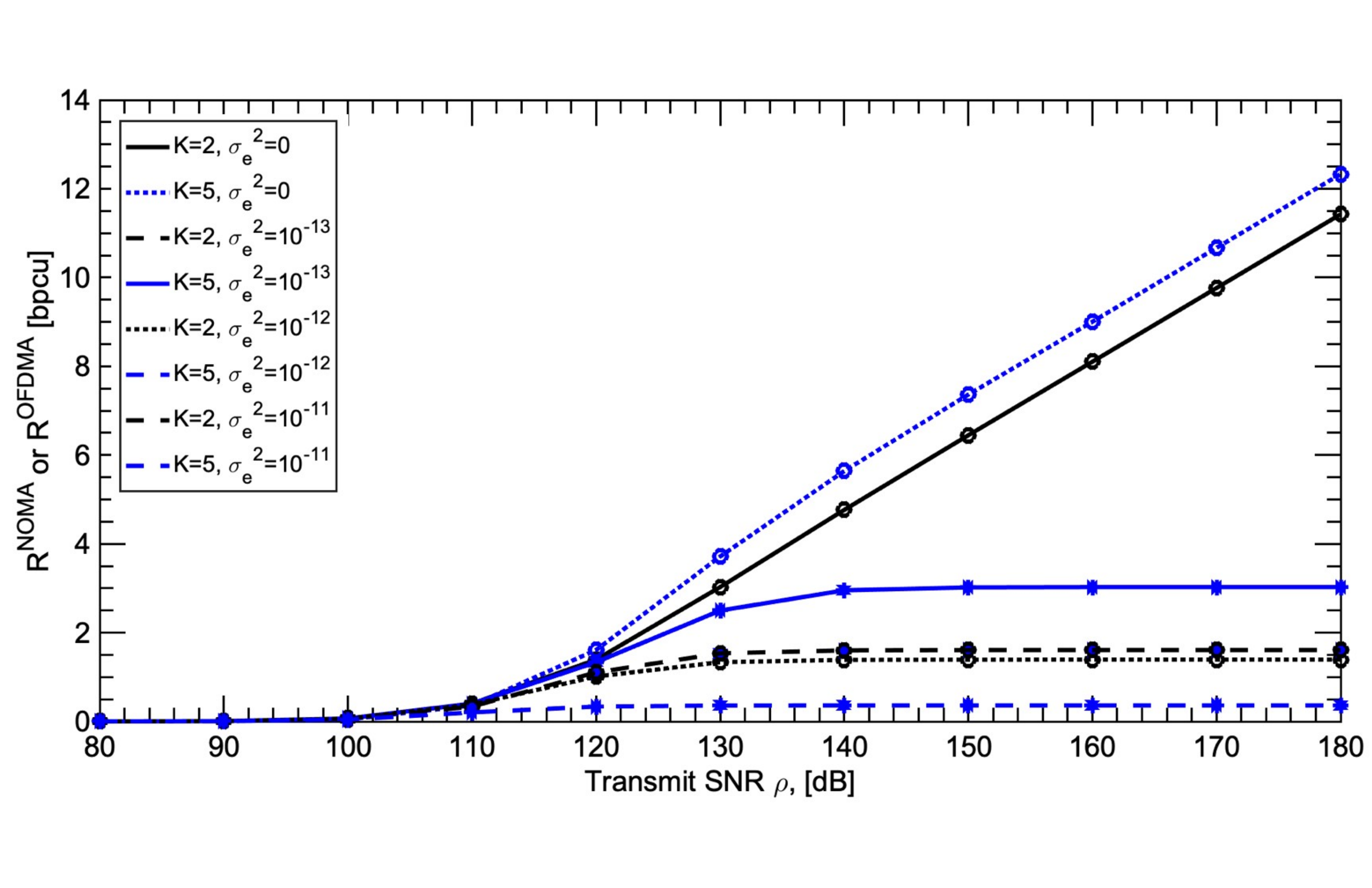}
  \caption{Sum rate vs transmit SNR in NOMA-VLC system.}
\label{fig2}
\end{figure}

  Fig. 10 illustrates  the average energy efficiency as a function of the probability of LOS availability for the case of $K = 4$, with each user  being capable of receiving both RF and VLC signals simultaneously. Finally, the transmit SNR is 150 dB for the VLC network and 30 dB for the RF network, while we assume equal probability of LOS component availability for both RF and VLC,  i.e. $\rho_{\rm{VLC}}$=$\rho_{\rm{RF}}$=$\rho$.
It is evident from Fig. 10 that the introduced  NOMA-VLC-RF system with imperfect CSI is highly sensitive to both moderate and high CSI errors. 
For instance, when the LOS availability probability is 0.5, the average energy efficiency drops from $1.55 \times 10^{7}$  bit/J to $0.65 \times 10^{7}$ bit/J in the case of moderate CSI error, and to $0.5 \times 10^{7}$ bit/J in the case of high CSI error. In other words, there is a 59\% to 70\% loss in terms of average energy efficiency when considering the practical case of CSI error.
Therefore, and given that the effect  of  CSI errors are usually neglected, it is of paramount importance to take these effects into thorough consideration during practical designs of conventional VLC ir hybrid VLC/RF systems.

\subsection{Energy Efficiency of Hybrid NOMA based RF/VLC}

This subsection analyzes the average energy efficiency of the considered hybrid VLC-RF network. To this end, a room of size 4m$\times$ 4m$\times$ 2.5m with a single VLC AP and RF AP installed on the ceiling is considered. The VLC AP has a fixed power of 4 Watt, and the RF AP has a fixed power of 6.7 W \cite{25}. The conversion efficiency of each VLC AP is 1 Watt/Amps, and its half-power angle is 50$^\circ$. The aim is to quantify the efficiency of the system when adopting NOMA within a hybrid VLC-RF network under imperfect CSI, which has been shown to be detrimental in these systems. 
In addition, the case of imperfect CSI and uniformly distributed users with a fixed power allocation policy for all users is   analyzed.

The second observation is that as the LOS availability parameter increases, the performance degradation between the perfect CSI scenario and the imperfect CSI scenario increases significantly. For instance, when the LOS availability probability is 0.1, the average energy efficiency drops from $1 \times 10^{7}$ bit/J to $0.63 \times 10^{7}$ bit/J in the case of moderate CSI error, and to $0.48 \times 10^{7}$ bit/J in the case of high CSI error. However, in the case of full LOS availability, the average energy efficiency drops from $2.57 \times 10^{7}$ bit/J to $0.9\times 10^{7}$ bit/J in the case of moderate CSI error, and to $0.58 \times 10^{7}$ bit/J in the case of high CSI error.
\begin{figure}
\includegraphics[height=7cm,width=9cm]{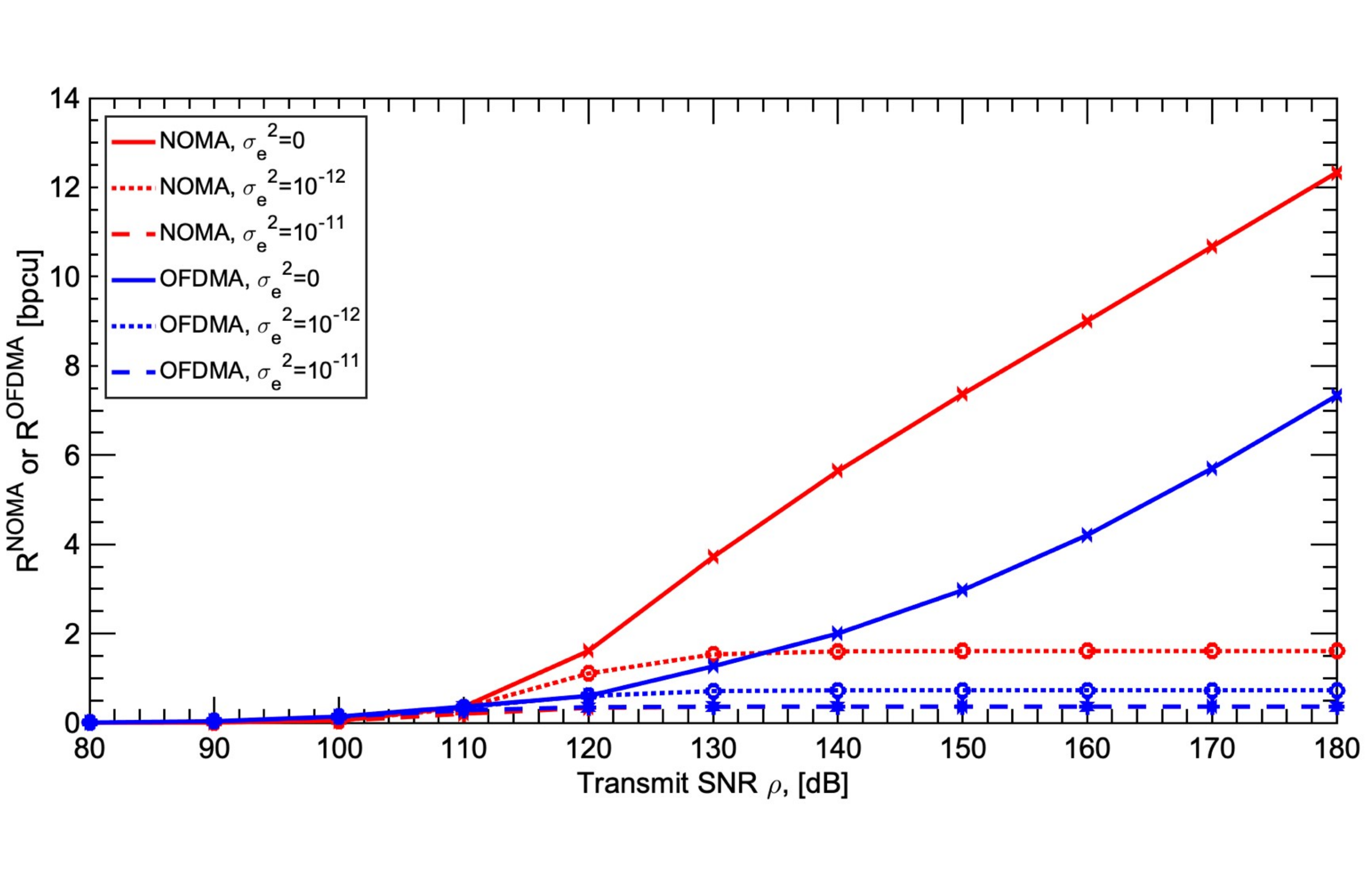}
\caption{Sum rate Vs transmit SNR for NOMA-VLC and OFDMA-VLC}
\label{fig2}
\end{figure}
\begin{figure}
\centering
\includegraphics[height=6cm,width=9cm]{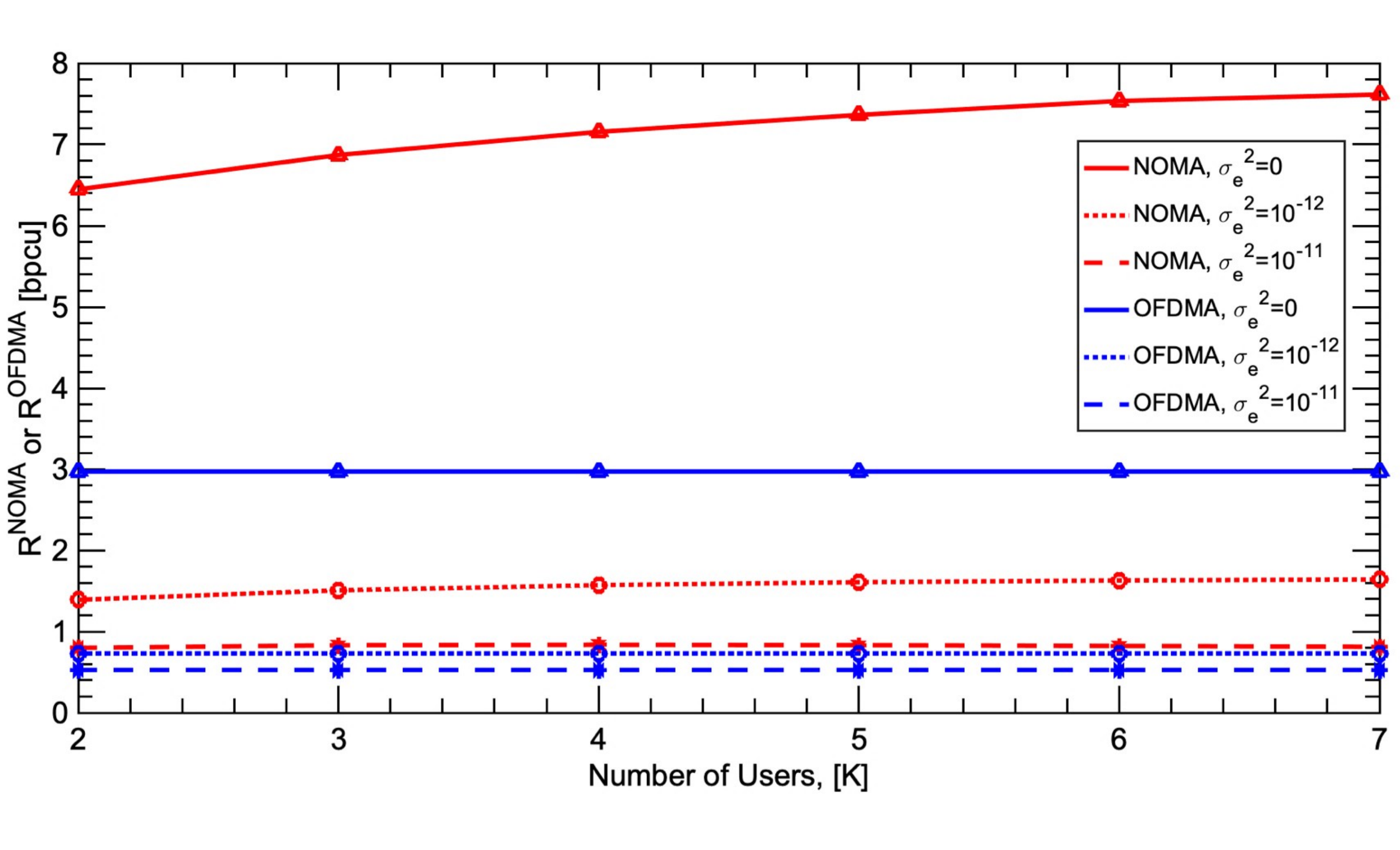}
  \caption{Sum rate vs number of users for NOMA-VLC}
\label{fig2}
\end{figure}

In the same context, the average energy efficiency is illustrated in Fig. 11 as a function of VLC fixed power consumption. Under this scenario, two LOS conditions, \it{i.e.} \rm , 0.5 and 1 at high CSI error are considered.
The average energy efficiency is severely degraded when CSI error exists as compared to the case of error-free CSI. Moreover, the average energy efficiency reduced from $3.3 \times 10^{7}$ bit/J to $2.3 \times 10^{7}$ bit/J which is almost a  30\% drop. However, in the case of imperfect CSI, the average energy efficiency is almost flat regardless of the VLC fixed power consumption value. As a result, we can conclude that as the CSI error increases, reducing the VLC fixed power consumption will not result to any noticeable performance gain in terms of bits/J. Therefore, it is advised to mitigate the root cause of the issue which is the CSI error.
 \begin{figure}
\centering
\includegraphics[height=6.9cm,width=9cm]{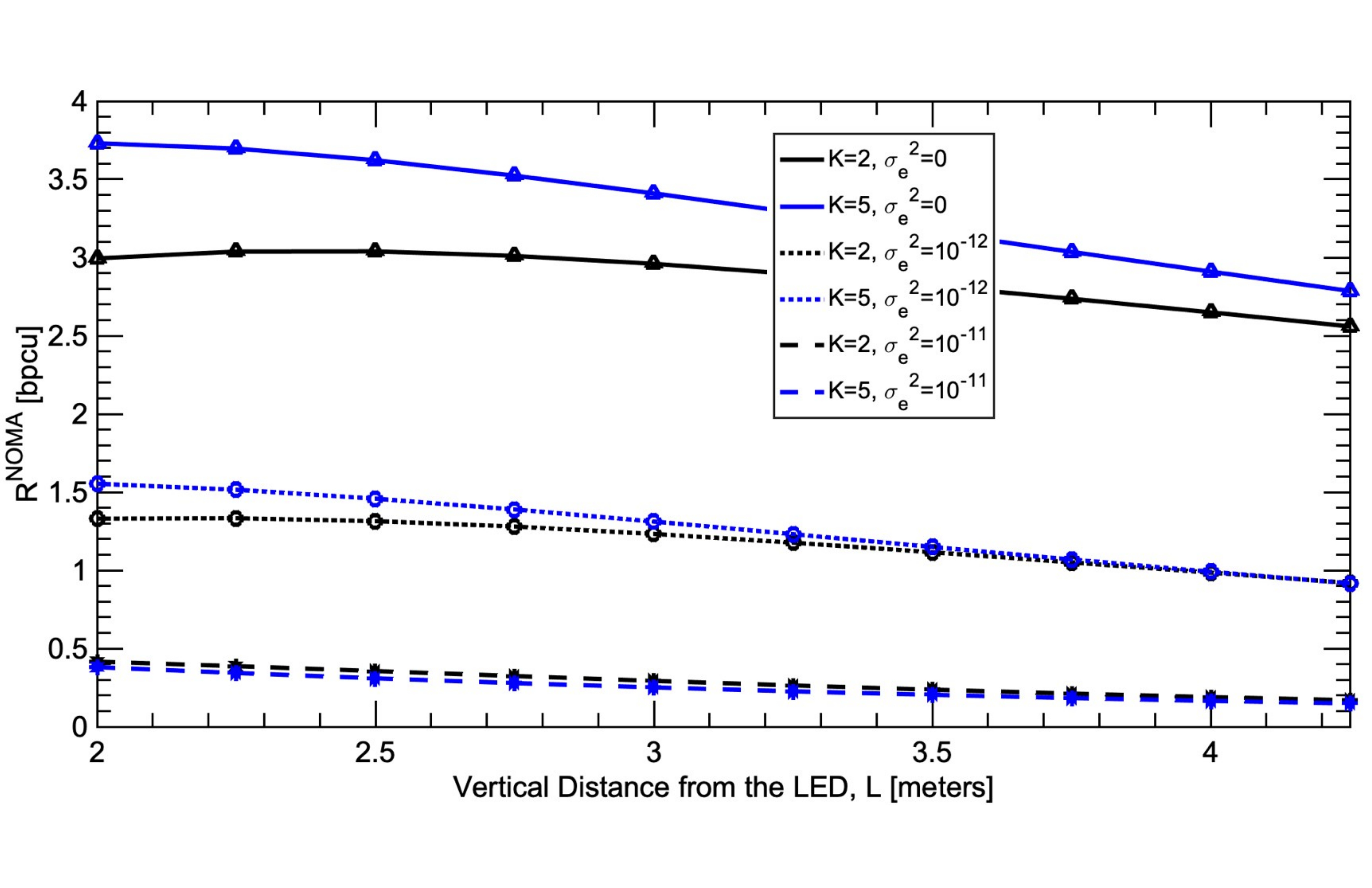}
  \caption{Sum rate vs. vertical distance from LED for NOMA-VLC}
\label{fig2}
\end{figure}
\begin{figure}
\centering
\includegraphics[height=7cm,width=9cm]{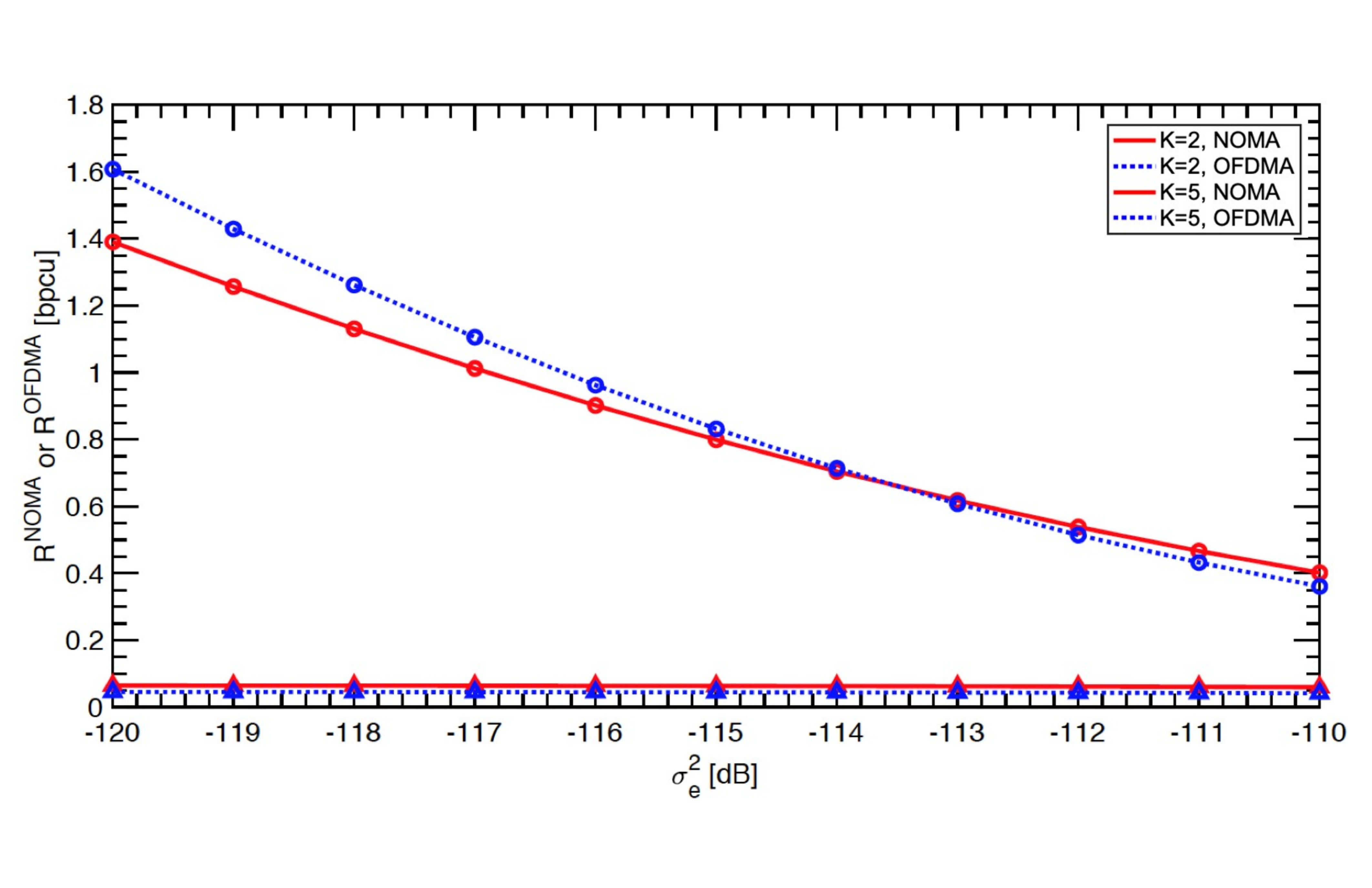}
  \caption{Sum rate Vs CSI error (NOMA-VLC)}
\label{fig2}
\end{figure}

\begin{figure}
\centering
\includegraphics[height=7cm,width=9cm]{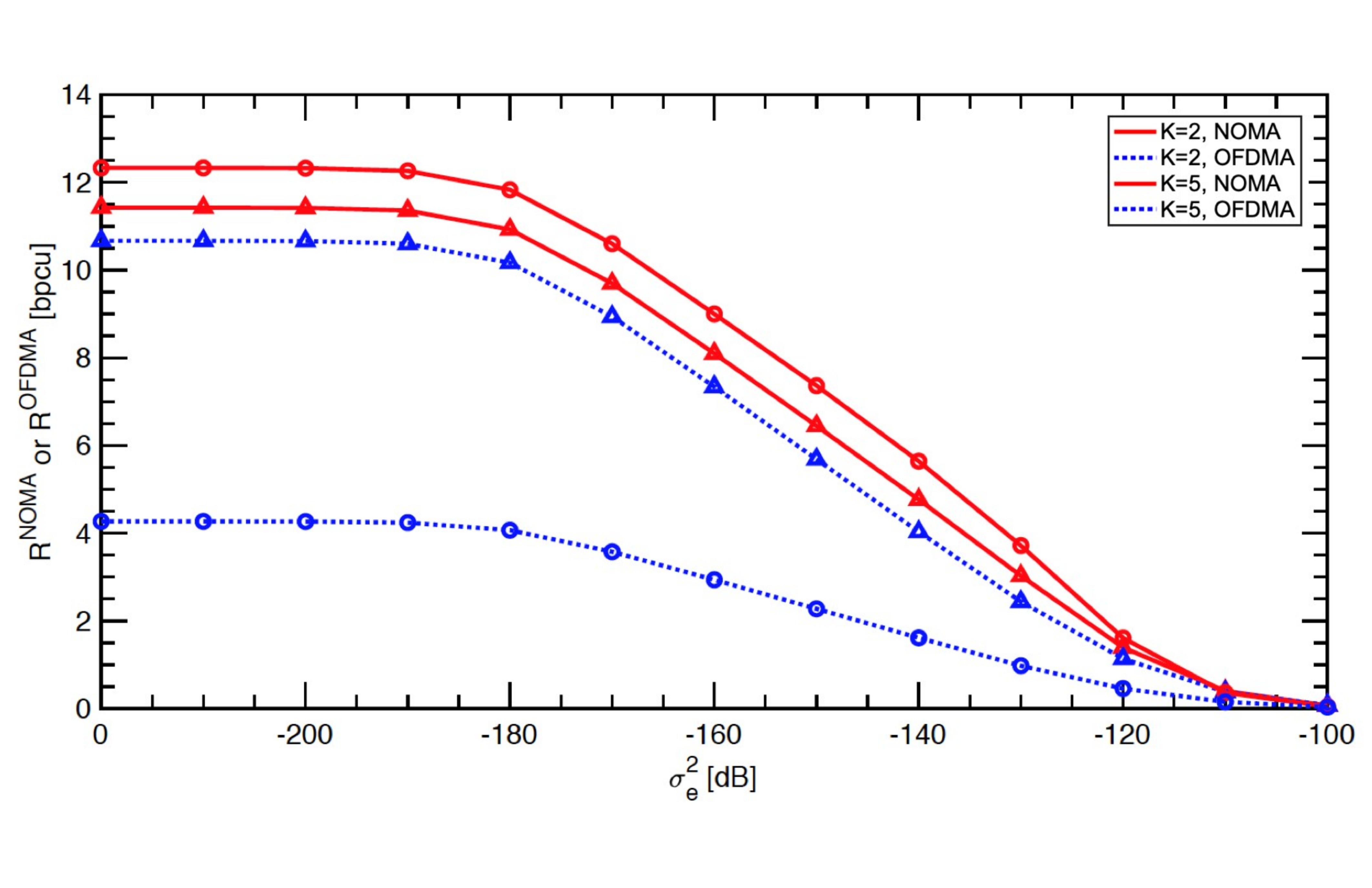}
  \caption{Sum rate Vs CSI error (NOMA-VLC and OFDMA-VLC)}
\label{fig2}
\end{figure}
\begin{figure}
\centering
\includegraphics[height=6.2cm,width=9cm]{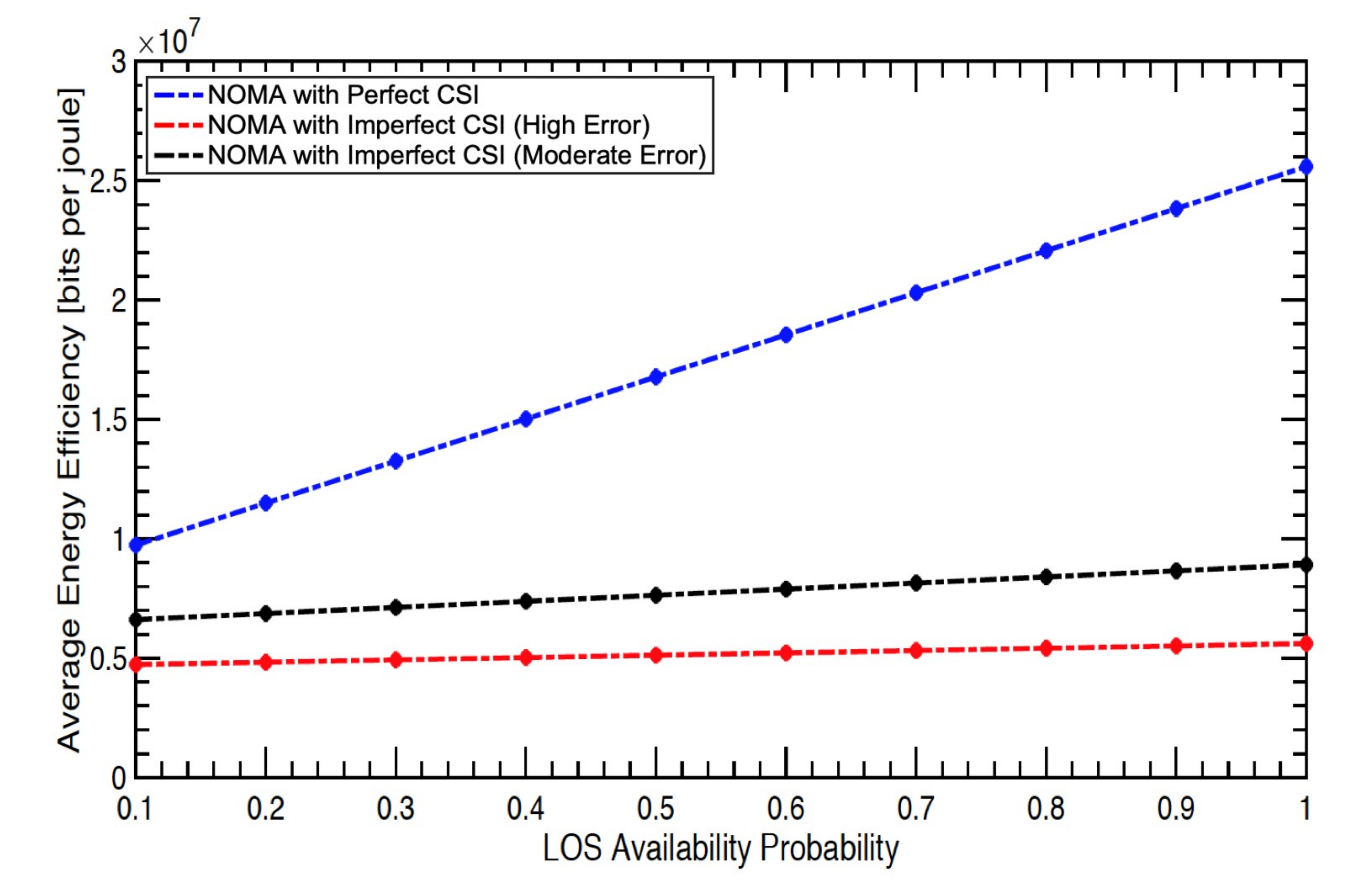}
\caption{Average energy efficiency vs. LOS availability probability with NOMA under different CSI conditions, with 4 users.}
\label{fig2}
\end{figure}
\begin{figure}
\centering
\includegraphics[height=7cm,width=9cm]{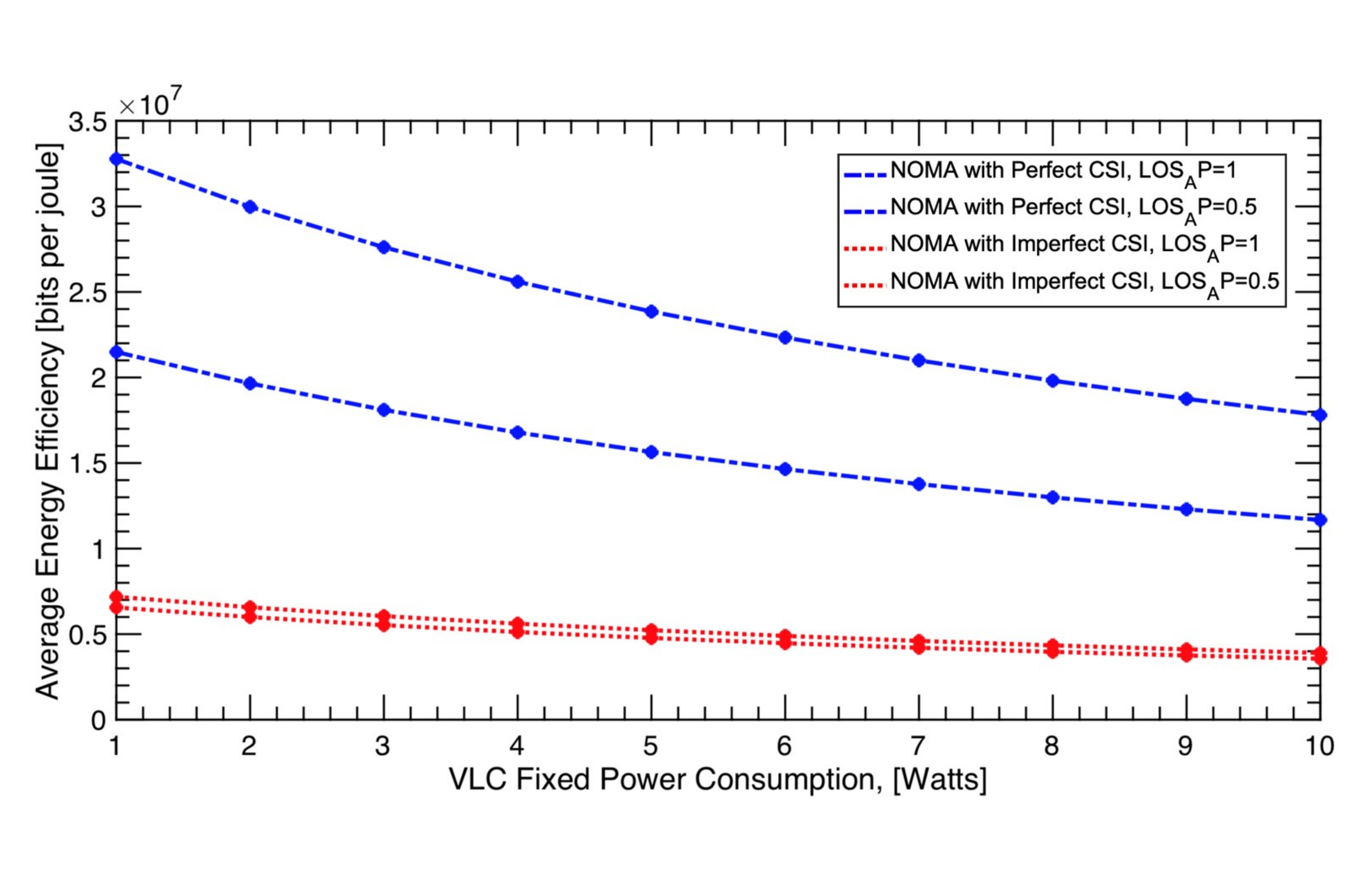}
\caption{Average energy efficiency vs. VLC AP power consumption using the NOMA scheme and the OFDMA scheme, with 4 users.}
\label{fig2}
\end{figure}

\indent Fig. 12 presents the average energy efficiency as a function of the number of users, using two LOS conditions (0.5 and 1), and with both perfect and imperfect high CSI error. The advantageous performance of NOMA becomes evident  since it exhibits an almost unaffected average energy efficiency performance while increasing the number of users. However, two main observations can be drawn from this figure.
First, the impact of LOS becomes negligible in case of high CSI error. For instance, in the case of 4 users with perfect CSI knowledge, the average energy efficiency decreases from  $2.5\times 10^{7}$ bit/J to $1.75\times 10^{7}$ bit/J. However, in the case of imperfect CSI, the performance drops from $0.6\times 10^{7}$ bit/J to $0.55\times 10^{7}$ bit/J.

\indent Finally, Fig. 13 demonstrates the average energy efficiency as a function of the LOS availability probability, under perfect and imperfect CSI in the presence of high CSI error. In the case of perfect CSI, as expected, the VLC-Only system outperforms the hybrid VLC-RF system when the probability of LOS availability is higher than 0.15. However, in the practical scenario where the CSI is imperfect, the advantages of the robust hybrid VLC-RF scheme are shown against the prone standalone VLC system. Notably, the average energy efficiency for the standalone VLC system with imperfect CSI is close to zero when there is a dominant NLOS scenario. However, for the hybrid VLC-RF system, the average energy efficiency withstands the severe LOS situation with an acceptable value of $0.77\times 10^{7}$ bit/J.
In addition, as the LOS availability probability reaches to 1 in the case of imperfect CSI, the average energy efficiency of both schemes meet at $0.9\times 10^{7}$ bit/J. This means that the hybrid VLC-RF scheme is considerably more robust to withstand all the cases of LOS and can only be better when there is a partial LOS, or can provide equal performance to the VLC-Only system when the LOS is fully available.

\begin{figure}
\centering
\includegraphics[height=7cm,width=9cm]{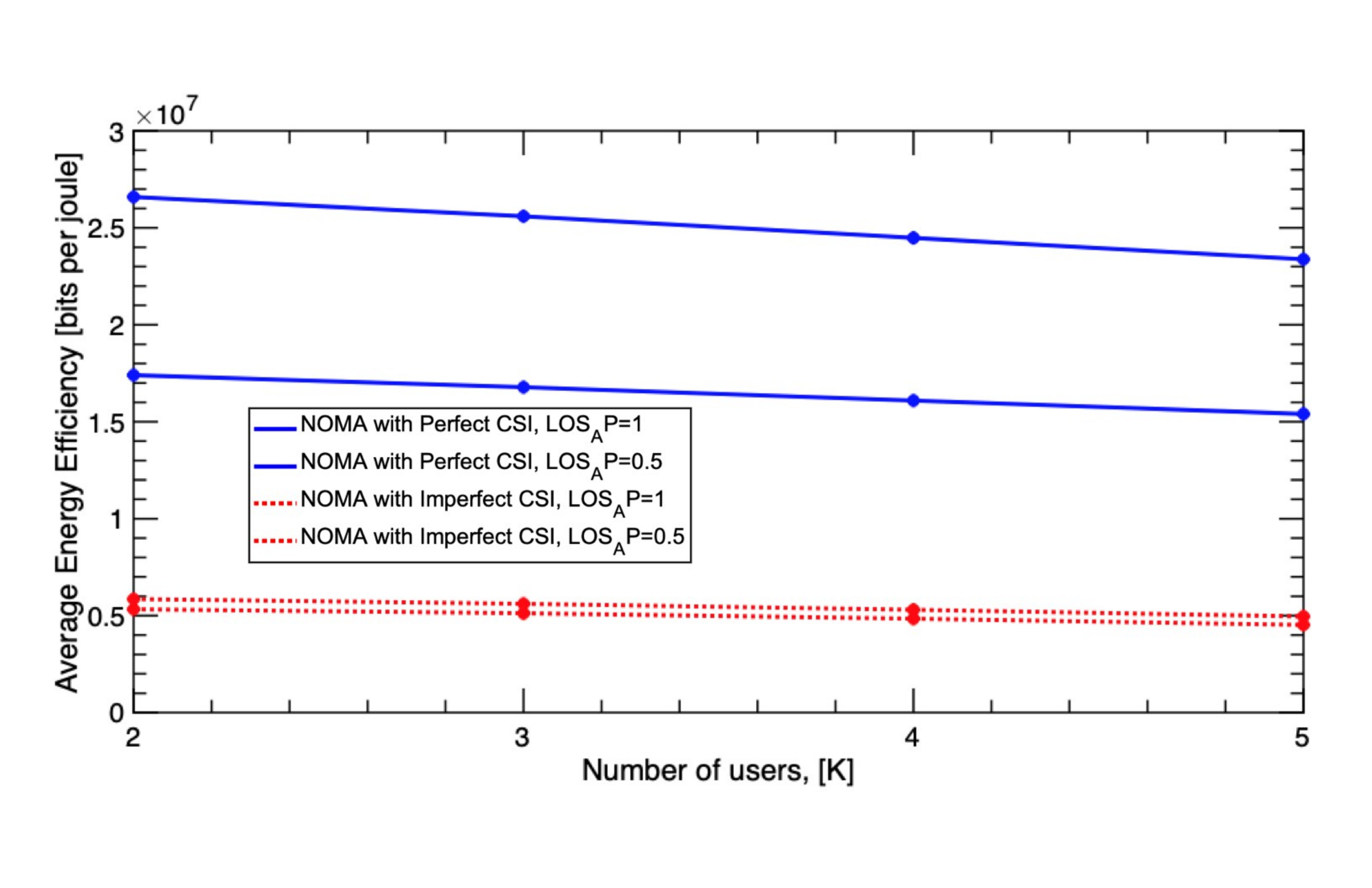}
\caption{Average energy efficiency vs. number of users, under imperfect CSI}
\label{fig2}
\end{figure}
\begin{figure}
\includegraphics[height=6.5cm,width=9.5cm]{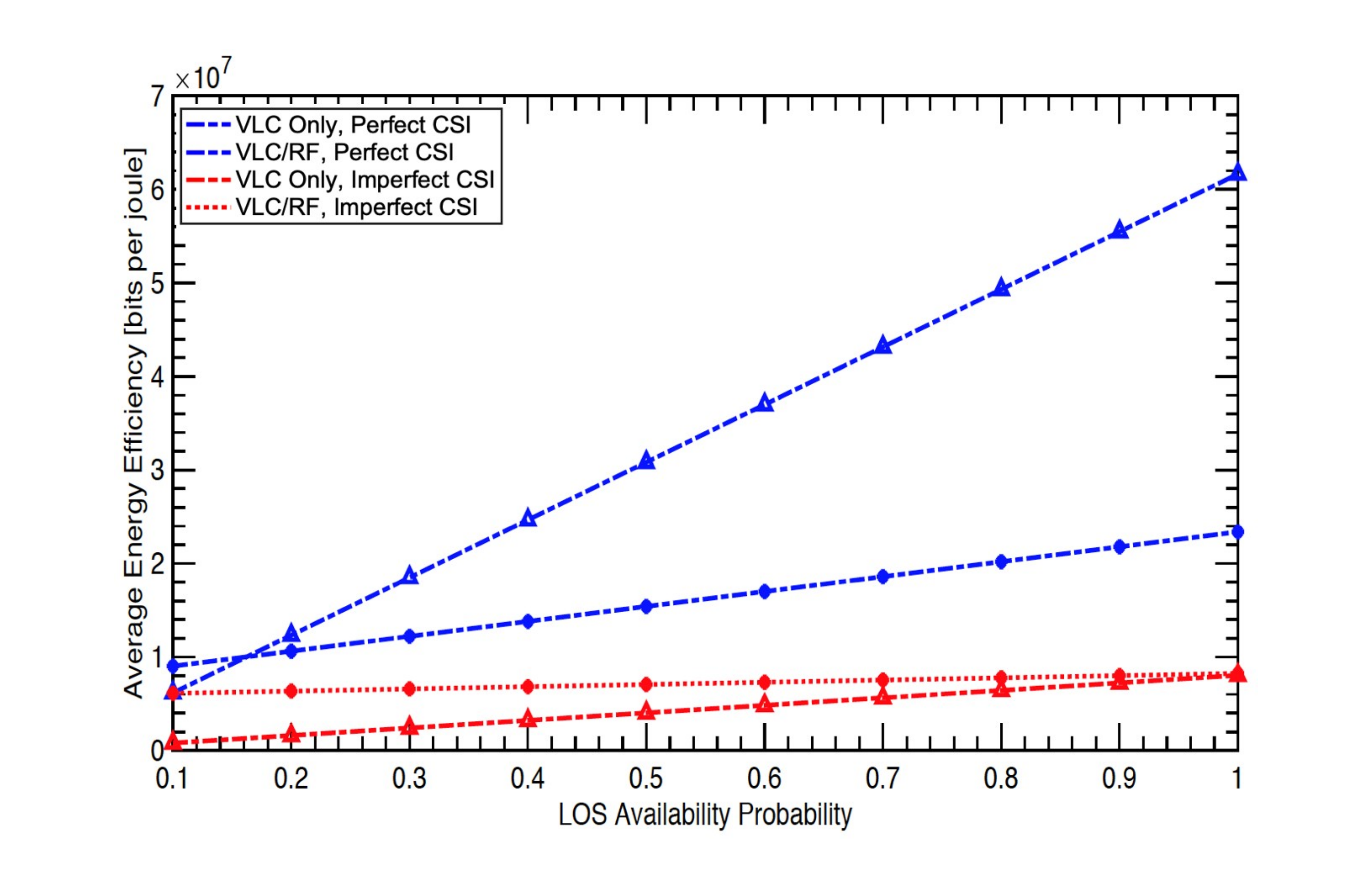}
\caption{Average energy efficiency vs. LOS availability probability, using standalone VLC Scheme, and Hybrid VLC-RF Scheme, under imperfect CSI, with 4 users.}
\label{fig2}
\end{figure}

\section{Conclusion}
This work evaluated the performance of a hybrid NOMA-VLC-RF system, assuming imperfect CSI and uniformly distributed users. Closed-form expressions were derived for the corresponding average sum rate and average energy efficiency, which were extensively verified through comparisons with respective computer simulation results. The results showed that the amount of CSI error has a considerable impact on the performance for both NOMA and OFDMA schemes.
 Although NOMA outperforms OFDMA in all the cases, it was also shown that NOMA-VLC systems with a relatively high number of users are more sensitive compared to the NOMA-VLC system with a lower number of users.
Moreover, it was shown that the CSI error poses a considerable impact on the average energy efficiency of the hybrid NOMA-VLC-RF system. It was shown that decreasing the CSI error yields a much higher increase in the average energy efficiency, as compared to merely decreasing the VLC power consumption.
The work also demonstrated that the hybrid scheme of VLC-RF with NOMA is much more robust than the VLC-Only System with NOMA, under the case of imperfect CSI. It was also shown that the impact of CSI errors is the most dominant factor affecting the average energy efficiency when compared to other parameters such as  the number of users and VLC power consumption.
Finally, the reported results can highly aid the design of practical hybrid VLC-RF systems, since they reveal the high impact of CSI error, whereby a small difference in the amount of error may lead to a considerable impact on the average sum rate and the average energy efficiency of realistic communication scenarios.

 \begin{center}
\bibliographystyle{IEEEtran}
\thebibliography{99}

\bibitem{WCNC}
 A. Al-Hammadi, S. Muhaidat, P. C. Sofotasios, and M. Al-Qutayri, “A robust and energy efficient 
NOMA-enabled hybrid VLC/RF wireless network,” \emph{2019 IEEE Wireless Communications and Networking Conference (WCNC ‘19)}, Marrakech, Morocco, Apr.  2019, pp. 1--6.

\bibitem{1}
T.~Asai, ``5{G} {R}adio {A}ccess {N}etwork and its {R}equirements on {M}obile
  {O}ptical {N}etwork,'' \emph{Conference Proceedings - 2015 International
  Conference on Optical Network Design and Modeling, ONDM 2015}, pp. 7--11, Jun. 
  2015.

\bibitem{2}
J.~G. {Andrews}, S.~{Buzzi}, W.~{Choi}, S.~V. {Hanly}, A.~{Lozano}, A.~C.~K.
  {Soong}, and J.~C. {Zhang}, ``What {W}ill 5{G} {B}e?'' \emph{IEEE J. Sel.
  Areas Commun.}, vol.~32, no.~6, pp. 1065--1082, June 2014.

\bibitem{3}
S.~Dimitrov and H.~Haas, \emph{{P}rinciples of {L}ED {L}ight {C}ommunications:
  {T}owards {N}etworked {L}i-{F}i}, Mar.  2015.

\bibitem{4}
C.-H. Chang, C.-Y. Li, H.-H. Lu, C.-Y. Lin, J.-H. Chen, Z.-W. Wan, and C.-J.
  Cheng, ``A 100-{G}b/s {M}ultiple-{I}nput {M}ultiple-{O}utput {V}isible
  {L}aser {L}ight {C}ommunication {S}ystem,'' \emph{J. Light. Technol.},
  vol.~32, no.~24, pp. 4121--4127, Dec.  2014.

\bibitem{8}
Y.~Lan, A.~Benjebboiu, X.~Chen, A.~Li, and H.~Jiang, ``{C}onsiderations on
  {D}ownlink {N}on-{O}rthogonal {M}ultiple {A}ccess ({NOMA}) {C}ombined with
  {C}losed-{L}oop {SU-MIMO},'' \emph{Proceedings of International Conferences
  on Signal Processing (ICSPCS `15)}, Jan. 2015, pp. 1--5.

\bibitem{9}
X.~{Chen}, A.~{Benjebbour}, Y.~{Lan}, A.~{Li}, and H.~{Jiang}, ``Impact of
  {R}ank {O}ptimization on {D}ownlink {N}on-{O}rthogonal {M}ultiple {A}ccess
  {(NOMA)} with {SU-MIMO},'' \emph{2014 IEEE International Conference on
  Communication Systems},  Nov. 2014, pp. 233--237.

\bibitem{10}
S.~{Timotheou} and I.~{Krikidis}, ``{F}airness for {N}on-{O}rthogonal
  {M}ultiple {A}ccess in 5{G} {S}ystems,'' \emph{IEEE Signal Process. Lett.},
  vol.~22, no.~10, pp. 1647--1651, Oct. 2015.

\bibitem{11}
Z.~{Ding}, F.~{Adachi}, and H.~V. {Poor}, ``{T}he {A}pplication of {MIMO} to
  {N}on-orthogonal {M}ultiple {A}ccess,'' \emph{IEEE  Trans. Wireless
  Commun.}, vol.~15, no.~1, pp. 537--552, Jan. 2016.

\bibitem{12}
H.~{Marshoud}, V.~M. {Kapinas}, G.~K. {Karagiannidis}, and S.~{Muhaidat},
  ``{N}on-{O}rthogonal {M}ultiple {A}ccess for {V}isible {L}ight
  {C}ommunications,'' \emph{IEEE Photon. Technol. Lett.}, vol.~28, no.~1, pp.
  51--54, Jan. 2016.

\bibitem{13}
R.~C. {Kizilirmak}, C.~R. {Rowell}, and M.~{Uysal}, ``{N}on-{O}rthogonal
  {M}ultiple {A}ccess ({NOMA}) for {I}ndoor {V}isible {L}ight
  {C}ommunications,'' \emph{2015 4th International Workshop on Optical Wireless
  Communications (IWOW)},  Sep. 2015, pp. 98--101.

\bibitem{14}
L.~{Yin}, W.~O. {Popoola}, X.~{Wu}, and H.~{Haas}, ``{P}erformance {E}valuation
  of {N}on-{O}rthogonal {M}ultiple {A}ccess in {V}isible {L}ight
  {C}ommunication,'' \emph{IEEE Trans. Commun.}, vol.~64, no.~12, pp.
  5162--5175, Dec. 2016.

\bibitem{15}
G.~{Nauryzbayev}, M.~{Abdallah}, and H.~{Elgala}, ``{O}n the {P}erformance of
  {NOMA}-{E}nabled {S}pectrally and {E}nergy {E}fficient {OFDM} ({SEE-OFDM})
  for {I}ndoor {V}isible {L}ight {C}ommunications,''\emph{IEEE Vehicular Technology Conference Spring 2018 (VTC '18 Spring )},  Jun. 2018, pp. 1--5.

\bibitem{16}
H.~{Marshoud}, P.~C. {Sofotasios}, S.~{Muhaidat}, G.~K. {Karagiannidis}, and
  B.~S. {Sharif}, ``{O}n the {P}erformance of {V}isible {L}ight {C}ommunication
  {S}ystems {W}ith {N}on-{O}rthogonal {M}ultiple {A}ccess,'' \emph{IEEE 
  Trans. Wireless Commun.}, vol.~16, no.~10, pp. 6350--6364, Oct. 2017.

\bibitem{27}
M.~{Rahaim}, I.~{Abdalla}, M.~{Ayyash}, H.~{Elgala}, A.~{Khreishah}, and
  T.~D.~C. {Little}, ``{W}elcome to the {CROWD}: {D}esign {D}ecisions for
  {C}oexisting {R}adio and {O}ptical {W}ireless {D}eployments,'' \emph{IEEE
  Network}, vol.~33, no.~5, pp. 174--182, Sep. 2019.

\bibitem{18}
S.~{Shao}, A.~{Khreishah}, M.~{Ayyash}, M.~B. {Rahaim}, H.~{Elgala},
  V.~{Jungnickel}, D.~{Schulz}, T.~D.~C. {Little}, J.~{Hilt}, and R.~{Freund},
  ``{D}esign and {A}nalysis of a {V}isible-{L}ight-{C}ommunication {E}nhanced
  {W}i{F}i {S}ystem,'' \emph{IEEE J. Optical Commun. and Netw.},
  vol.~7, no.~10, pp. 960--973, Oct. 2015.

\bibitem{19}
Y.~{Wang} and H.~{Haas}, ``{D}ynamic {L}oad {B}alancing {W}ith {H}andover in
  {H}ybrid {L}i-{F}i and {W}i-{F}i networks,'' \emph{OSA J. Light. Technol.},
  vol.~33, no.~22, pp. 4671--4682, Nov. 2015.

\bibitem{20}
X.~{Li}, R.~{Zhang}, and L.~{Hanzo}, ``{C}ooperative {L}oad {B}alancing in
  {H}ybrid {V}isible {L}ight {C}ommunications and {W}i{F}i,'' \emph{IEEE Trans.
  Commun.}, vol.~63, no.~4, pp. 1319--1329, Apr.  2015.

\bibitem{21}
X.~{Bao}, X.~{Zhu}, T.~{Song}, and Y.~{Ou}, ``{P}rotocol {D}esign and
  {C}apacity {A}nalysis in {H}ybrid {N}etwork of {V}isible {L}ight
  {C}ommunication and {OFDMA} {S}ystems,'' \emph{IEEE Trans. Veh. Technol.},
  vol.~63, no.~4, pp. 1770--1778, May 2014.

\bibitem{22}
J.~{Kong}, M.~{Ismail}, E.~{Serpedin}, and K.~A. {Qaraqe}, ``{E}nergy
  {E}fficient {O}ptimization of {B}ase {S}tation {I}ntensities for {H}ybrid
  {RF/VLC} {N}etworks,'' \emph{IEEE   Trans. Wireless Commun.}, vol.~18,
  no.~8, pp. 4171--4183, Aug. 2019.

\bibitem{23}
H.~{Zhang}, N.~{Liu}, K.~{Long}, J.~{Cheng}, V.~C.~M. {Leung}, and L.~{Hanzo},
  ``{E}nergy {E}fficient {S}ubchannel and {P}ower {A}llocation for
  {S}oftware-defined {H}eterogeneous {VLC} and {RF} {N}etworks,'' \emph{IEEE J.
  Sel. Areas Commun.}, vol.~36, no.~3, pp. 658--670, Mar. 2018.

\bibitem{28}
A.~{Khreishah}, S.~{Shao}, A.~{Gharaibeh}, M.~{Ayyash}, H.~{Elgala}, and
  N.~{Ansari}, ``{A} {H}ybrid {RF-VLC} {S}ystem for {E}nergy {E}fficient
  {W}ireless {A}ccess,'' \emph{IEEE Trans. Green Commun. Netw.}, vol.~2, no.~4,
  pp. 932--944, Dec. 2018.

\bibitem{30}
V.~K. {Papanikolaou}, P.~D. {Diamantoulakis}, Z.~{Ding}, S.~{Muhaidat}, and
  G.~K. {Karagiannidis}, ``{H}ybrid {VLC}/{RF} {N}etworks with
  {N}on-{O}rthogonal {M}ultiple {A}ccess,'' \emph{IEEE Global Communications Conference 2018 (Globecom `18)},  Dec. 2018, pp. 1--6.

\bibitem{31}
V.~K. {Papanikolaou}, P.~D. {Diamantoulakis}, P.~C. {Sofotasios},
  S.~{Muhaidat}, and G.~K. {Karagiannidis}, ``{O}n {O}ptimal {R}esource
  {A}llocation for {H}ybrid {VLC}/{RF} {N}etworks with {C}ommon {B}ackhaul,''
  \emph{{IEEE} {T}rans. {C}ognitive {C}ommun.  {N}etw.}, vol.~6, no.~1, pp. 352--365, Mar.  2020.

\bibitem{32}
X.~{Z}hou, S.~{L}i, Y.~{W}en, Y.~{H}an, and D.~{Y}uan, ``{C}ooperative {NOMA}
  {B}ased {VLC}/{RF} {S}ystem with {S}imultaneous {W}ireless {I}nformation and
  {P}ower {T}ransfer,'' \emph{IEEE International Conference on Communications in China 2018 (ICCC `18)}, Aug. 2018,  pp. 100--105.

\bibitem{24}
H.~{Chowdhury}, I.~{Ashraf}, and M.~{Katz}, ``{E}nergy-{E}fficient
  {C}onnectivity in {H}ybrid {R}adio-{O}ptical {W}ireless {S}ystems,''
  \emph{ISWCS 2013; The Tenth International Symposium on Wireless Communication
  Systems},  Aug. 2013, pp. 1--5.

\bibitem{25}
M.~{Kashef}, M.~{Ismail}, M.~{Abdallah}, K.~A. {Qaraqe}, and E.~{Serpedin},
  ``{E}nergy {E}fficient {R}esource {A}llocation for {M}ixed {RF/VLC}
  {H}eterogeneous {W}ireless {N}etworks,'' \emph{IEEE J. Sel. Areas Commun.},
  vol.~34, no.~4, pp. 883--893, Apr.  2016.

\bibitem{5}
Y.~Saito, Y.~Kishiyama, A.~Benjebbour, T.~Nakamura, A.~Li, and K.~Higuchi,
  ``{N}on-{O}rthogonal {M}ultiple {A}ccess {(NOMA)} for {C}ellular {F}uture
  {R}adio {A}ccess,'' \emph{2013 IEEE 77th Vehicular Technology Conference (VTC
  Spring)},  Jun. 2013, pp. 1--5.

\bibitem{book2}
H.~A. David, \emph{Order Statistics}.\hskip 1em plus 0.5em minus 0.4em\relax
  Berlin, Heidelberg: Springer Berlin Heidelberg, 2011.

\bibitem{26}
Z.~{Yang}, Z.~{Ding}, P.~{Fan}, and G.~K. {Karagiannidis}, ``{O}n the
  {P}erformance of {N}on-{O}rthogonal {M}ultiple {A}ccess {S}ystems with
  {P}artial {C}hannel {I}nformation,'' \emph{IEEE Trans. Commun.}, vol.~64,
  no.~2, pp. 654--667, Feb.  2016.
  
  \bibitem{Tables} 
A. P. Prudnikov, Yu. A. Brychkov, and O. I. Marichev, 
\emph{Integrals and Series}, 3rd ed. New York: Gordon and Breach Science, vol. 1, Elementary Functions, 1992.

\end{center}

\end{document}